\documentclass[a4paper,USenglish]{lipics-v2021}
\usepackage[hideChapterNumber,refchapterCite,noCCBy]{craigstyle}

\usepackage{craigstyle}

\usepackage[utf8]{inputenc}
\usepackage{qtree}
\usepackage{color}
\usepackage{graphicx}
\usepackage{contour}
\usepackage{caption}
\usepackage{subcaption}
\usepackage{stmaryrd}
\usepackage{paralist}
\usepackage{cmap}           
\usepackage{xspace}

\DeclareMathAlphabet{\mathcal}{OMS}{cmsy}{m}{n}

\usepackage{amsthm}
\usepackage{amsmath}
\usepackage{amssymb}
\usepackage{proof}
\usepackage{xspace}
\usepackage{cellspace}
\usepackage{tikz}
\usetikzlibrary{shapes,backgrounds}

\usepackage{balance}
\usepackage[linesnumbered,vlined,ruled]{algorithm2e}


\newcolumntype{C}[1]{>{\centering\let\newline\\\arraybackslash\hspace{0pt}}m{#1}}
\newcolumntype{L}[1]{>{\let\newline\\\arraybackslash\hspace{0pt}}m{#1}}

\newenvironment{myexmp}{\begin{example}}{\null\hfill\lipicsEnd\end{example}}

\newcommand{\rqfo}{\kw{RQFO}}

\newcommand{\arity}{\kw{arity}}

\newcommand{\adom}{\kw{adom}}

\newcommand{\return}{\kw{Return}}

\newcommand{\indomain}{\kw{InDomain}}
\newcommand{\inmap}{\kw{InMap}}
\newcommand{\outmap}{\kw{OutMap}}

\newcommand{\infacccon}{\kw{InfAccCopy}}

\newcommand{\IS}{induced-subinstance}

\newcommand{\abind}{\kw{vbind}}
\newcommand{\accbind}{\kw{AccBind}}
\newcommand{\smith}{\text{``$\kw{Smith}$''}}
\newcommand{\jones}{\text{``$\kw{Jones}$''}}
\newcommand{\mathematics}{\kw{mathematics}}

\newcommand{\wrt}{w.r.t.}
\newcommand{\ie}{i.e.}
\newcommand{\eg}{e.g.}

\newcommand{\asch}{\kw{Sch}}
\newcommand{\aschema}{\asch}

\newcommand{\posexistsineq}{\exists^{+, \neq}}

\newcommand{\existsineq}{\exists^{\neq}}
\newcommand{\deacc}{\kw{DeAcc}}
\newcommand{\lostarski}{\L{}os-Tarski}

\newcommand{\bindpat}{\kw{BindPatt}}

\newcommand{\nneg}{\mathop{\sim}}

\newcommand{\T}{{\cal T}}

\renewcommand{\S}{{\cal S}}

\newcommand{\true}{\kw{True}}
\newcommand{\false}{\kw{False}}
\newcommand{\select}{\sigma}

\newcommand{\join}{\bowtie}

\newcommand{\accpart}{\kw{AccPart}}
\newcommand{\accs}{\kw{AcSch}}
\newcommand{\accsb}{\kw{AcSch}^\leftrightarrow}
\newcommand{\accsbp}{\kw{AltAcSch}^\leftrightarrow}

\newcommand{\False}{\false}
\newcommand{\True}{\true}
\newcommand{\accsneg}{\kw{AcSch}^\neg}

\newcommand{\spj}{\ensuremath{{SPJ}}^{\neq}\xspace}

\newcommand{\uspj}{\ensuremath{{USPJ}}^{\neq}\xspace}

\newcommand{\uspjneg}{\ensuremath{{USPJAD}^{\neq}}\xspace}
\newcommand{\ESPJ}{\ensuremath{{SPJ}}\xspace}
\newcommand{\espj}{\ESPJ}
\newcommand{\EUSPJ}{\ensuremath{{USPJ}}\xspace}
\newcommand{\euspj}{\EUSPJ}
\newcommand{\EUSPJneg}{\ensuremath{{USPJAD}}\xspace}
\newcommand{\euspjneg}{\EUSPJneg}

\newcommand{\kw}[1]{{\mathsf{#1}}\xspace}
\newcommand{\department}{\kw{Department}}

\newcommand{\uemployee}{\kw{UEmployee}}
\newcommand{\employee}{\kw{Employee}}

\newcommand{\deptid}{\kw{deptid}}
\newcommand{\mgrid}{\kw{mgrid}}

\newcommand{\did}{\deptid}

\newcommand{\profname}{\kw{profname}}
\newcommand{\studentid}{\kw{studid}}
\newcommand{\studid}{\studentid}
\newcommand{\dname}{\kw{dname}}
\newcommand{\profid}{\kw{profid}}
\newcommand{\lname}{\kw{lname}}
\newcommand{\student}{\kw{Student}}
\newcommand{\professor}{\kw{Professor}}

\newcommand{\profinfo}{\kw{Profinfo}}

\newcommand{\univdirectory}{\kw{Udirectory}}
\newcommand{\udirectory}{\univdirectory}

\newcommand{\eid}{\kw{eid}}

\newcommand{\lastname}{\kw{lastname}}

\newcommand{\accessible}{\kw{accessible}}

\newcommand{\acc}[1]{\kw{Accessed} #1}
\newcommand{\accq}[1]{\kw{InfAcc} #1}

\newcommand{\infacc}[1]{\kw{InfAcc} #1}

\newcommand{\plan}{\kw{Plan}}

\newcommand{\mt}{\kw{mt}}
\newcommand{\aplan}{\kw{PL}}
\newcommand{\outcomet}[3]{\ensuremath{\llbracket#2{\mid#1}\rrbracket_{#3}}}
\newcommand{\outcomeb}[2]{\ensuremath{\llbracket#1 \rrbracket ( #2) } }
\newcommand{\interpret}[2]{\ensuremath{\llbracket#1 \rrbracket ( #2) } }

\newcommand{\ids}{\kw{Ids}}
\newcommand{\names}{\kw{Names}}
\newcommand{\uname}{\kw{uname}}
\newcommand{\addr}{\kw{addr}}
\newcommand{\uid}{\kw{uid}}

\renewcommand{\phi}{\varphi}

\newcommand{\myparagraph}[1]{\subparagraph{#1.}}

\newcommand{\myeat}[1]{}

\newcommand{\aninst}{\kw{I}}
\newcommand{\inst}{\aninst}

\newcommand{\phone}{\kw{phone}}

\title{\vspace{26pt}Interpolation and Query Rewriting}
\author{Michael Benedikt}{University of Oxford, UK}{michael.benedikt@cs.ox.ac.uk}{}{}{}
\begin{document}
\maketitle

\vspace{-4pt}
\begin{abstract}
We overview applications of Craig interpolation and Beth definability to simplifying logical expressions or database queries. From the perspective of the theory of interpolation and definability the results give a number of new angles. First, they give a different take  on what it means to make  definability or interpolation results effective, looking at algorithms that take a proof as input and return an interpolant or explicit definition as output. Secondly, they relate interpolation and definability to preservation theorems in model theory: interpolation and definability theorems are the basis for many ``semantics-to-syntax''  results, relating a semantic property of a formula to its equivalence with a certain syntactic form. Thirdly, they motivate new forms of interpolation and definability, focusing on syntactic forms that are of interest in databases.
\end{abstract}
\vspace{-3pt}
\tableofcontents

\section{Overview} \label{sec:textoverview}
There are a number of applications of Craig interpolation in data management, but this chapter will focus on one type of application, to \emph{rewriting queries}.
The idea  is that we want to  translate a \emph{declarative source query} into
a declarative query or procedural plan in a \emph{target language} that abides by certain \emph{interface restrictions}.
By a declarative source query we will usually mean a request for information from a database, where the request is specified
using a formula of first-order logic, or its equivalent in the database query language SQL.
We  often focus on  source queries given in the language of \emph{conjunctive queries}, which corresponds
to a very simple subset of SQL.
 In logical terms these correspond to  formulas built up using only conjunction and existential quantification.
By a plan in a target language we may mean either another declarative query, or something more operational,
such as a relational algebra expression, possibly using some restricted set of physical operators.

Translating from a declarative source query to a target in this broad
sense captures a huge number of activities that take place within data management.
One of the basic features of modern database systems is to decouple
the vocabulary employed by users to   ask questions about a dataset from
the data structures that are used to implement data access:
the ``logical model'' and the ``physical model''. In  relational
databases, the logical model might be a set of tables given an SQL schema: users
interact with the data by sending SQL queries that mention these tables. The database manager
performs a translation of the queries written over this high-level vocabulary
into a program that interacts with the data using some set of data access functions.

In this ``classical query evaluation'' scenario, the relationship of the source
vocabulary to the target vocabulary is very simple: every source relation corresponds
to one or more ``physical datasource'' relations or functions.
A more complicated example comes from data integration systems. The goal there is to mask a diverse set of  datasources
by providing a single unified schema. Suppose we have a multitude of ``local databases'' with information
about different hotel chains.  A data integration system
would present to users a ``global schema'' --- an easier-to-understand  vocabulary that can model information in all of the sources.
The  system accepts queries written over the global schema and translates them into queries over the
various local sources. The ``global database'' is  virtual, implicitly defined by its relationship with  local data.

Having talked generally about what we mean by sources and targets, we mention several flavors of  languages that represent targets for a rewriting algorithm.
The most basic kind of restriction we look at is a \emph{vocabulary-based restriction}.
There the source is a query represented as a logical formula, and the target is also a logical formula,
but we limit the target's vocabulary.
We begin with a logic-based query $Q$ written using some relations $R_1 \ldots R_j$,
and want to convert it to another logic-based query
$Q_{\vec V}$ making use of a different set of relations $V_1 \ldots V_k$.
If the queries $Q$ and $Q_{\vec V}$ mention different relations,  we cannot expect
$Q$ to give the same answers as  $Q_{\vec V}$ on arbitrary instances.
But our schema will come with \emph{integrity constraints} --- usually given as a set of first-order sentences --- which restrict the possible instances of interest.
We will thus be considering equivalence only on instances satisfying the constraints.

An example of vocabulary-based restriction comes from \emph{reformulating queries over views}.
We have a collection of view relations $V_1 \ldots V_k$,
and each $V_i$ is associated with a query $Q_i$ defined using some other set of relations $R_1 \ldots R_j$.
Given a query $Q$ defined over $R_1 \ldots R_j$, the goal is to find an
equivalent query $Q_{\vec V}$ that mentions only $V_1 \ldots V_k$.
Generally, additional restrictions will be put on $Q_{\vec V}$.
For example, it should be a conjunctive query, or a union of conjunctive queries, or a relational algebra query.
The view-based query reformulation problem can thus be seen as a special case of vocabulary-based restriction, where the integrity constraints are of the form:
\[
\forall x_1 \ldots \forall x_n ~ V_i(x_1 \ldots x_n) \leftrightarrow Q_i(x_1 \ldots x_n)
\]
for $1 \leq i \leq k$.
Here the constraint with $\leftrightarrow$ could be rewritten to be a conjunction of two constraints with $\rightarrow$, one in each direction, and we will use this version below.

\begin{myexmp} \label{ex:views}
A university database has a table $\professor$ containing ids and  last names of professors, along with the name of the professor's department.
It also has a table $\student$ listing the id and last name of each student,
along with their advisor's id.

The database  does not allow users to access the $\professor$ and $\student$ tables directly,
but instead exposes a view $\professor'$ where the id attribute is dropped,
and a table $\student'$ where the advisor's id is replaced with the advisor's last name.

That is, $\professor'$ is a view defined by the formula:
\begin{align*}
\{~\lname,~\dname~\mid \exists~\profid~\professor(\profid,~\lname,~\dname)\}
\end{align*}
or equivalently by the constraints:
\begin{align*}
\forall \profid ~ \forall \lname ~ \forall \dname ~
\professor(\profid, \lname, \dname)
\rightarrow \professor'(\lname, \dname) \\
\forall \lname ~ \forall \dname ~
\professor'(\lname, \dname) \rightarrow
\exists \profid ~ \professor(\profid, \lname, \dname)
\end{align*}
$\student'$ is a view defined by the formula:
\begin{align*}
\{ \studentid, \lname, \profname \mid \\
\exists \profid ~ \exists \dname ~ \student(\studentid, \lname, \profid)
\wedge \professor(\profid, \profname, \dname) \}
\end{align*}
or equivalently by constraints:
\begin{align*}
\forall \studentid ~ \forall \lname ~ \forall \dname ~ \forall \profid ~ \forall \profname ~
[\professor(\profid, \profname, \dname)\; \wedge \\
\student(\studentid, \lname, \profid) \rightarrow \student'(\studentid, \lname, \profname) ]\\
\\
\forall \studentid ~ \forall \lname ~ \forall \profname
~ [\student'(\studid, \lname, \profname) \rightarrow \\
\exists \profid ~ \exists \dname ~
\professor(\profid, \profname, \dname) \wedge  \student(\studentid, \lname, \profid)]
\end{align*}

Consider the query  asking for the names of the advisors of a given student.
We can answer this by simply using the $\student'$ view, returning the $\profname$ attribute of a tuple in the view.
A formula over relational expression that uses $\student'$ to answer the query is called
a \emph{reformulation of the query over the views}. More generally it is
an example of a \emph{reformulation of the query over the subvocabulary $\student'$, relatively to the
background theory consisting of constraints relating $\student'$ to the $\student$}.

But a query asking for the last names of all  students that have an advisor in the history department
can \emph{not} be answered using the information in the views:
knowing the advisor's name is not enough to identify the department.

Integrity constraints need not be restricted to expressing  view definitions. A natural use of constraints is
to represent \emph{relationships between sources}, such as overlap in the data.
This overlap can be exploited to take a query  specified over a source
that \textit{a priori} does not have sufficient data, and reformulate it  over a source that
provides the necessary data.
\end{myexmp}

\myparagraph{Access Methods} A finer notion of interface than vocabulary-based restrictions is based on \emph{access methods},
which state that a relation can only be accessed by lookups where certain arguments must be provided.
One example of a restricted interface based on access methods comes from web forms.
Thinking of the form as exposing a virtual table, the mandatory fields must be filled in by the user,
while submitting the form returns all tuples that match the entered values.
Other examples include web services and legacy databases.

\begin{myexmp} \label{exone}
Consider a $\profinfo$ table available via a web form, containing information about faculty,
including their last names, office number, and employee id, but with a restricted interface that
requires giving an employee id as an input.
The interface could be implemented as a web form that requires
entering an employee's id and then pressing a submit button to get the matching records.
The query $Q$ asking for ids of faculty named $\smith$ cannot be ``completely answered'' over this
schema:
that is, there is no function over the available data in this schema which is equivalent to $Q$.

But suppose another web form provides access to a $\univdirectory$ table containing
the employee id and last name of every university employee,
with an interface that allows one to access the entire contents of the table.
Then we can reason that $Q$ is answerable using the information in this schema:
a plan would pull tuples from the $\univdirectory$ form and check them within the $\profinfo$ form
to see if they correspond to a faculty member.
\end{myexmp}

In Example~\ref{exone}, reasoning about access considerations was straightforward,
but in the presence of more complex schemas, we may have to chain several inferences,
resulting in a plan that makes use of several auxiliary accesses.

\begin{myexmp} \label{extwo}
Suppose we have two directory data sources with overlapping information.
One source exposes information from $\udirectory_1 (\uname, \addr, \uid)$ via an
access method requiring a $\uname$ and $\uid$.
There is also a ``web table'' $\ids(\uid)$ with no access restriction, that makes available the set
of $\uid$s
(hence we have a ``referential integrity constraint'' saying that every $\uid$ in $\udirectory_1$ matches a $\uid$ in $\ids$).
The other source exposes $\udirectory_2(\uname, \addr, \phone)$, requiring a $\uname$ and $\addr$,
and also a web table $\names(\uname)$ with no access restriction that reveals all $\uname$s in $\udirectory_2$ 
(that is, a constraint that each $\uname$ in $\udirectory_2$  appears in $\names$).
There is also a constraint saying that
each $\uname$ and $\addr$ in  $\udirectory_2$ appears in $\udirectory_1$.

A query asking for all phone numbers in the second directory could be written:
\[
Q=\{ \phone \mid \exists  \uname ~  \exists \addr ~ \udirectory_2(\uname, \addr, \phone) \}.
\]

There is a plan that implements this query:
it gets all the $\uid$s from $\ids$ and  $\uname$s from $\names$ first,
puts them into the access on $\udirectory_1$,
then uses the $\uname$ and $\addr$ of the resulting tuples to get the phone numbers in $\udirectory_2$.
\end{myexmp}

We emphasize that the notion of rewriting we consider here
is getting target queries  which give the same answers as the original query.
This  means that if we have a query asking for the office number
of all professors with last name $\smith$, the plan produced should
return all tuples in the answer, even if access to the $\professor$ relation is limited.
This contrasts with a line of work in data integration that considers
the broader question of how to answer any query ``as much as possible given the available data'':
 how to get the \emph{certain answers}
or how to compute the \emph{maximally contained query}~\cite{dataint}.

\myparagraph{Rewriting via Interpolation: the Meta-Algorithm}
We will now
explain the connection of rewriting to interpolation and Beth definability.
There is a general template for solving a rewriting problem:

\begin{enumerate}
\item  Isolate a \emph{semantic property}  that any input query $Q$ must have with respect to the target $\T$
and constraints $\Sigma$ in order to have a reformulation of the desired type.
\item Express this property as a \emph{proof goal} (in the language we use later on, an \emph{entailment}):
        a statement that a certain formula follows from another formula.
\item Search for a proof of the entailment within a given proof system.
       For example, one could use tableau proofs or resolution proofs, both well-known proof systems within computational logic.
\item From the proof, extract an \emph{interpolant} using an interpolation algorithm.
        We will review some standard interpolation algorithms and also  present new ones.
\item Convert the interpolant to a reformulation. 
\end{enumerate}

As we will see below, the properties of an interpolant will play a role in the ability to convert to a reformulation, and different notions of interpolation, which control how similar an interpolant is to the statements we interpolate over, will be needed for different variations of reformulation.
This approach is very general and can be applied to a variety of proof systems and restrictions on the target, 
with different target languages corresponding to different entailments.
We prove theorems saying that the method is complete:
there is a reformulation exactly when there is a proof of the semantic property.
These completeness theorems give as a consequence a \emph{definability or preservation theorem}:
a query $Q$ has a certain kind of reformulation if and only if it has a certain semantic  property.
Such theorems are well-known in model theory, and indeed our theorems can be seen
as ``database versions'' of the preservation theorems that are known from classical model theory
textbooks, \eg~\cite{ChangKeisler}.

The fact that interpolation theorems can be used to prove preservation theorems is
certainly not new. As mentioned earlier in this volume,  William Craig used
interpolation to prove Beth's definability theorem, and all of our results can be seen
as generalizations of Craig's technique: see the bibliographic remarks at the end of this chapter for further details.

What we emphasize in this chapter, as in \cite{interpbook} is that the connection between reformulation, interpolation,
and preservation can be made effective: \emph{interpolation algorithms yield reformulation algorithms}. 
Actually, the interpolation technique yields many of the algorithmic results about reformulation that had appeared in the  database setting previously.

\section{Preliminaries} \label{sec:prelims}

We start with some preliminaries about relational databases and logic. A good reference for this is \cite{AHV}.

\myparagraph{Structures, Instances, and Range of Quantification}
A  \emph{schema} is a finite set of relation symbols, each associated to a number (the arity), along with a set of constant symbols.
A database \emph{instance}  $\aninst$ for a schema $\aschema$ assigns to every relation  $R$ in $\aschema$
of arity $n$ a collection of $n$-tuples  $\interpret{R}{\aninst}$.

We call $\interpret{R}{I}$ the \emph{interpretation} of $R$ in $\aninst$.
An association of a database relation  $R$ with a tuple $\vec c$ of the proper arity will be referred to as a \emph{fact}.
A database instance can equivalently be seen as a collection of facts.
An instance of a schema that has only a single relation $R$ is a \emph{relation instance}.
The \emph{active domain} of an instance $\aninst$, denoted $\adom(\aninst)$, is the union of the one-dimensional projections of all interpretations of relations:
that is, all the elements that participate in some fact of $\aninst$.

We can extend the notion of a schema to allow constant symbols: now an instance also maps each constant symbol to a value.
We can also extend the notion of a schema to allow \emph{integrity constraints}: in general these are an arbitrary subset of the instances. An instance of a schema with constraints is one where  all integrity constraints of the schema are satisfied.

\begin{myexmp} 
\label{ex:relationalschema}
Suppose our schema consists of only one unary relation $\uemployee$, containing the ids of university employees.  
One possible instance $\aninst$ interprets $\uemployee$ by the singleton set $\{e_0\}$.
We can alternatively define $\aninst$ by the set of facts $\{\uemployee(e_0)\}$.
In this case we have $\adom(\aninst)=\{e_0\}$.
\end{myexmp}

A \emph{query} (over a given vocabulary) and output arity $m$ is a function from instances of the schema to a set of $m$-tuples. A \emph{Boolean query} is the case where $m=0$: thus the output is a set of $0$-tuples. There are only two $0$-tuples: the empty set, which we identify with False, and the singleton $0$-tuple, which we identify with True.
If $m>0$ we talk of a \emph{non-Boolean query}.

Classical logic considers \emph{structures} rather than \emph{instances}.
A structure consists of a set, the \emph{domain} of the structure,
interpretations for each relation as sets of tuples with values in the domain, and an interpretation for each constant as a single
element of the domain.
 
\begin{myexmp}
\label{ex:structure}
Returning to the schema in Example~\ref{ex:relationalschema}, one structure $M$ that conforms to it
consists of   a two-element domain $\{e_0, f_0\}$, with $\uemployee$
interpreted as $\{e_0\}$.
\end{myexmp}

In the \emph{classical semantics} of first-order logic, quantifiers range over the domain of the structure.
In databases, quantifiers are usually given the \emph{active domain} semantics, in which the quantified variable
  ranges over the union of the values in the interpretations of relations.
The active domain semantics can be used
to give a meaning to a sentence in an instance, since  the meaning
only depends on the instance, not some 
domain in which the instance sits. 

A \emph{relational atom} is an expression $R(u_1 \ldots u_n)$ where $R$ is an $n$-ary relation symbol and the $u_i$ are either constants or variables. 
We will use  \emph{first-order logic with relativized quantifiers}, $\rqfo$.
$\rqfo$ is built up from equality and relational atoms
via the Boolean operators and the quantifications:
\[
\exists \vec x ~ R(\vec s, \vec x) \wedge \phi(\vec s, \vec x, \vec t)
\]
 and
\[
\forall \vec{x} ~ R(\vec s, \vec x) \rightarrow \phi(\vec s, \vec x, \vec t)
\]
for $R$ a relational symbol and $\phi$ an $\rqfo$ formula.
Those familiar with active domain semantics will be able to see that it is subsumed by $\rqfo$,
in the sense that an active domain semantics formula can be converted to an equivalent $\rqfo$ formula under classical semantics:
the active domain interpretation of the quantification $\exists x ~ \phi(x, \vec y)$ is a disjunction $\bigvee_i \exists \vec u_i ~ R_i(\vec u_i) \wedge \phi(x, \vec y)$, where $x$ is one of the variables in each $\vec u_i$.
Similarly a quantification $\forall x ~ \phi$ under active domain
semantics translates into a conjunction of  relativized universal
quantifications.

We can similarly talk about equality-free $\rqfo$ formulas, by disallowing equality atoms.
By convention, we allow the prefix of existential quantifiers
$\exists x_1 ~ \ldots ~ \exists x_n$ to be empty, and similarly for universal quantifiers.
In this way we can capture an atom $R(x_1 \ldots x_n)$ and negated atoms
$\neg R(x_1 \ldots x_n)$ as specialized $\rqfo$ formulas
$R(\vec x) \wedge \true$ and $R(\vec x) \rightarrow \false$, respectively.
Thus for equality-free formulas we can in fact assume that the only base formulas are $\true$ and $\false$.

The truth value of an $\rqfo$ sentence without constants is well-defined over an instance, since
 from the instance we can determine the range of each relativized quantifier.
Similarly the truth value for an $\rqfo$ formula without constants is well-defined given an instance
and a mapping of the free variables to some values (a \emph{variable binding} or just \emph{binding} for short).

An $\rqfo$ formula is in \emph{Negation Normal Form} (NNF) if negation is only applied to relational atoms. By simple rules, one can convert any $\rqfo$ formula into NNF.

\myparagraph{Relational Algebra}
$\rqfo$ gives a natural way of defining Boolean queries over finite instances. For non-Boolean queries we review another standard formalism.
Given a relational vocabulary, we can build up \emph{relational algebra} a language for defining queries. A relational algebra expression is associated with an output type, which is a finite set of attribute names. The atomic expressions are the relation symbols for names in the vocabulary, and for a relation symbol $R$ of arity $n$, the output type has names $\{1 \ldots n\}$.
The language is closed under the operations:
\begin{compactitem}
    \item 
difference, intersection, and union: if $E$ and $F$ are expressions having the same output $T$, $E-F$, $E \cup F$ and $E \cap F$ are new expressions also having type $T$
\item projection: if $E$ is an expression with output type $a_1 \ldots a_n$ and $S$ is a subset of $a_1 \ldots a_n$ then $\pi_S(E)$ is an expression with output type $S$
\item renaming: if $E$ is an expression with output type $S$ and $F$ is a bijection from
$S$ to $S'$ then $\rho_{F}(E)$ is an expression with output type $S'$
\item product: if $E$ and $F$ are expressions with disjoint output types $S$ and $S'$, then $E \times F$ is an expression with output type $S \cup S'$
\item selection: if $E$ is an expression with output type $\{a_1 \ldots a_k\}$ then $\sigma_{a_i=a_j} E$ is an expression with the same output type.
\end{compactitem}

The semantic function for relational algebra maps each term $E$ with output type $S$ to a query whose input is an instance over the input vocabulary and whose output is an instance of the signature consisting of an $S$-ary relation symbol, along with a mapping of positions to elements of $S$.
For atomic terms it is the identity. The inductive definition of each operator is also straightforward. 

We note a superficial difference between relational algebra and logic-based formalisms like $\rqfo$. Relational algebra uses \emph{named notation}: it deals with instances where in each tuple of a relation, the different components are referred to via attribute names. As opposed to logical formalisms, where we have \emph{positional notation}: each relation has some arity $m$ and different components are referred to by a position $1 \leq i \leq m$. It is easy to map between named and positional notation. Below we will abuse notation slightly and talk about a formalism in one notation being equivalent in expressiveness to a formalism in the other: this is up to some canonical way of mapping between names and positions. One can find the formalization of such mappings in \cite{AHV}.

A term with empty output type returns either the empty instance or the instance consisting of a single empty tuple: it can thus be considered as a Boolean query.

\begin{proposition} \label{prop:relalgandrqfo} A relational algebra expression of Boolean type can be mapped to an equivalent $\rqfo$ expression, and vice versa.
\end{proposition}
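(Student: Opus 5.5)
The plan is to prove a stronger statement by induction in each direction, since Boolean expressions are built from non-Boolean subexpressions.

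\emph{From relational algebra to $\rqfo$.} For every relational algebra expression $E$ with output type $S=\{a_1 \ldots a_k\}$, I will build an $\rqfo$ formula $\phi_E(x_{a_1} \ldots x_{a_k})$ such that, on every instance $\aninst$ and every binding $\vec c$, $\aninst \models \phi_E(\vec c)$ iff $\vec c \in \interpret{E}{\aninst}$. In addition I maintain the invariant that every free variable of $\phi_E$ is \emph{guarded}: $\phi_E$ is equivalent to a disjunction of formulas, each of which contains a relational atom mentioning that variable as a conjunct. This invariant holds because every tuple in the output of $E$ consists of active-domain elements.

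The cases are as follows.
\begin{compactitem}
\item For an atom $R$, take $R(\vec x) \wedge \true$.
\item Union and intersection become $\vee$ and $\wedge$.
\item Difference $E-F$ becomes $\phi_E \wedge \neg\phi_F$; it stays guarded by $\phi_E$.
\item Renaming is a renaming of variables.
\item Product is a conjunction over disjoint variable sets.
\item Selection $\sigma_{a_i=a_j}$ conjoins the equality $x_{a_i}=x_{a_j}$.
\item Projection is the only delicate case. To express $\exists \vec y~\phi_E$ with relativized quantifiers, I first put $\phi_E$ into a disjunction of guarded formulas, using the invariant. I then distribute $\exists$ over $\vee$. Each disjunct has the form $R(\vec s,\vec y')\wedge\psi$, which I rewrite as a relativized quantification, repeating until all of $\vec y$ is bound. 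Concretely, I will use the active-domain translation already noted in the preliminaries: a quantifier $\exists y$ is replaced by $\bigvee_i \exists \vec u_i~R_i(\vec u_i)\wedge\phi$, with $y$ among $\vec u_i$. This is sound precisely because the witnesses lie in the active domain, by the guardedness invariant.
\end{compactitem}
At type $\emptyset$ this yields an $\rqfo$ sentence.

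\emph{From $\rqfo$ to relational algebra.} For each $\rqfo$ formula $\phi(\vec x)$ I produce an expression $E_\phi$ whose output type is the free variables. Its correctness is stated relative to the active domain: $\interpret{E_\phi}{\aninst} = \{\vec c \in \adom(\aninst)^{|\vec x|} : \aninst\models\phi(\vec c)\}$. Let $D$ be the union of all one-column projections of the relations; then $D^k$ is expressible by products and renamings.
\begin{compactitem}
\item $\true$ becomes $D^k$ and $\false$ becomes $D^k-D^k$.
\item An equality atom becomes a selection on $D^k$.
\item Negation is $D^k - E_\psi$.
\item Conjunction and disjunction are obtained by padding both sides to the common variable set via products with $D$, then intersecting or taking the union.
\item A relativized $\exists\vec x\,R(\vec s,\vec x)\wedge\phi$ becomes a projection onto the free variables of $R\bowtie E_\phi$. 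The join is expressed by a product, equality selections for repeated variables and constants, and renamings. This case also handles constants and repeated variables in the atom.
\item The universal case is handled by duality, using $\neg\exists\neg$.
\end{compactitem}

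Soundness of restricting to the active domain comes from relativization. Bound variables always range inside some relation, so they lie in the active domain. The padding only matters for free variables, and for a sentence there are none, so the Boolean result is exact.

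The main obstacle is the projection case in the first direction. Relational algebra projection is unrelativized existential quantification, while $\rqfo$ only allows quantifiers guarded by an atom. I expect to handle this with the guardedness invariant together with the distribution of $\exists$ over the disjunctive guarded form, keeping careful track of the constants and the "$\vec s$" parameters in the quantification syntax. Constants in the schema are handled uniformly: as terms in atoms in $\rqfo$, and as additional single-tuple columns in the algebra. If the algebra lacks constant expressions, I restrict to constant-free formulas, matching the paper's remark that truth is well-defined only for constant-free formulas.
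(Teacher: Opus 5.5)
The paper does not prove this proposition; it quotes it as the standard textbook equivalence \cite{AHV}. Your two inductions are that standard argument.

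The first direction is sound. You do not need the syntactic guarded-disjunction form there. Replacing each projected variable $y$ by $\bigvee_{R,j}\exists\vec u\,R(\vec u)\wedge\ldots$ (with $u_j=y$) is justified by the semantic fact $\interpret{E}{\aninst}\subseteq\adom(\aninst)^k$. That fact holds for the paper's constant-free algebra, and your correctness invariant transfers it to $\phi_E$. The second direction is also sound for every arity $k\geq 1$. The one step that fails as written is its base at arity $k=0$.

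$D^0$ is not an expression of the paper's algebra, and the natural candidate $\pi_\emptyset(D)$ evaluates to $\emptyset$ whenever the active domain is empty. On such instances your clauses $\true\mapsto D^0$ and $\neg\psi\mapsto D^0-E_\psi$ give the wrong answer, as does the dual treatment of $\forall$. So your claim that ``for a sentence \ldots the Boolean result is exact'' is false there.

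This cannot be repaired inside the algebra as the paper defines it. A trivial induction shows that every expression built from relation symbols with $\cup,\cap,-,\pi,\rho,\times,\sigma$ evaluates to $\emptyset$ on the empty instance. Yet the $\rqfo$ sentences $\true$ and $\forall x\,(R(x)\rightarrow\false)$ are true there. Restricting to constant-free formulas, as you propose when the algebra lacks constants, does not help, since these sentences are constant-free.

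You therefore have to fix a convention and say which one you use:
\begin{itemize}
\item Add a constant expression $\{()\}$ of empty type to the algebra and use it as $D^0$. Translate it to $\true$ in your first direction; having no columns, it does not disturb the active-domain fact used there.
\item Or restrict to instances with nonempty active domain. There $\pi_\emptyset(D)$ evaluates to $\{()\}$ and can serve as $D^0$.
\end{itemize}
With either convention the rest of your argument goes through unchanged. The paper's informal statement is equally silent on this boundary case.
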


\begin{proposition} \label{prop:relalgsafe} The output of a relational algebra expression on a finite instance is a finite relation.
\end{proposition}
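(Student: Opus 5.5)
The natural route is structural induction on relational algebra expressions. We prove a stronger invariant: for every expression $E$ and every finite instance $\aninst$, every tuple in the output of $E$ on $\aninst$ has all its components in $\adom(\aninst)$. The output type $S$ of $E$ is a fixed finite set. So the output is contained in the set of functions from $S$ to $\adom(\aninst)$, which has $|\adom(\aninst)|^{|S|}$ elements. Since $\aninst$ is finite, $\adom(\aninst)$ is finite, and finiteness of the output follows at once. This active-domain containment is the real content; finiteness alone is awkward to push through difference or projection without it.

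The induction runs as follows.
\begin{compactitem}
\item \emph{Base case.} The expression is a relation symbol $R$. Its output is $\interpret{R}{\aninst}$, whose components lie in $\adom(\aninst)$ by the definition of active domain.
\item \emph{Union, intersection, difference.} Every output tuple of $E \cup F$, $E \cap F$ or $E - F$ is already an output tuple of $E$ or of $F$, so the invariant is inherited from the induction hypothesis.
\item \emph{Selection and renaming.} Selection only discards tuples. Renaming only relabels attribute names, so it leaves the set of values in each tuple unchanged.
\item \emph{Projection.} Each output tuple is a restriction of an input tuple to a subset of the attributes, so its values are among that input tuple's values.
\item \emph{Product.} An output tuple of $E \times F$ is the union of a tuple of $E$ and a tuple of $F$ on disjoint attribute sets. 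Both parts have values in $\adom(\aninst)$ by the induction hypothesis.
\end{compactitem}

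The only point that needs care is difference. It is the single non-monotone operator, and it is where a naive ``complement-like'' reading could suggest infinite output. The fix is to note that $E - F \subseteq E$, so only the hypothesis for $E$ is needed.

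A second subtlety is the schema extension allowing constant symbols. If relational algebra were extended with constant relations, the bound would have to use $\adom(\aninst)$ together with the finitely many constants of the expression. This set is still finite, so the same argument goes through.
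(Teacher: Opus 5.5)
Your proof is correct. The paper gives no proof of this proposition and treats it as a standard fact about relational algebra; your structural induction, with the invariant that every output value lies in $\adom(\aninst)$, is the standard argument.

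One correction to your commentary: the stronger invariant is not actually needed, and finiteness is not awkward to push through difference or projection. Finiteness is preserved directly by every operator:
\begin{itemize}
\item projection and renaming map a finite set onto a finite image;
\item $E - F \subseteq E$;
\item $E \times F$ has at most $|E|\cdot|F|$ tuples.
\end{itemize}
What the active-domain invariant buys you is the explicit bound $|\adom(\aninst)|^{|S|}$. It also gives the more informative fact that relational algebra never produces values outside the active domain.
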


A query is \emph{safe} if on every finite instance, there are only finitely many variable bindings that satisfy it.

There is a converse to Proposition \ref{prop:relalgsafe}: for any $\rqfo$ formula, if the query it defines is safe,   then there is a relational algebra expression that represents the same query.

A relational algebra expression that does not use the difference operator is said to be a $\uspj$ expression (for union-selection-projection-join). If further it does not use union, it is an $\spj$ expression.

\section{First Example: Query Reformulation over a Subvocabulary}
We now drill down into the first example of reformulation-via-interpolation: vocabulary-based reformulation.

\subsection{Reformulating Queries over Restricted Vocabularies: Definitions}

Let $\S$ be a collection of relations, $\Sigma$ a set of integrity constraints, and $\T$ a subset of $\S$.  
Given a query $Q$ specified  by a logical formula with $n$ free variables over $\S$, 
our first goal will be to get a \emph{relativized-quantifier first-order reformulation of $Q$ over $\T$ with respect to $\Sigma$}. 
This means a query $Q_\T$ given by a relativized-quantifier first-order formula using only the relations in $\T$ 
such that for every instance $I$ satisfying $\Sigma$, 
\[
\forall x_1 \ldots x_n ~ Q(x_1 \ldots x_n) 
\leftrightarrow Q_\T(x_1 \ldots x_n)
\]
 holds in $I$.
We use the same notation when $Q$ is a relational algebra query, 
identifying its output attributes with free variables.
Thus a reformulation is another query $Q_\T$ that is \emph{equivalent to $Q$ $\wrt$ $\Sigma$} or \emph{answers $Q$ $\wrt$ $\Sigma$}, 
meaning that they have the same output on instances satisfying $\Sigma$.

\subsection{From a Semantic Property to a First-Order Reformulation}
Let us return to the ``meta-algorithm'' for reformulating a query $Q$ with respect to a target language and
a set of constraints, mentioned in the introduction. Recall that step (1)
is to identify a property that $Q$ must have in order to admit the desired reformulation. 
Clearly, for $Q$ to have any reformulation over $\T$, its output should depend only on the interpretations
of the relations in $\T$. This notion of ``implicit definability'' has been formalized by Segoufin and Vianu
in \cite{SVconf}, as the notion of 
\emph{determinacy} for database views and queries. We extend
the definition here to the setting where there are constraints given by $\rqfo$ sentences.

If $\Sigma$ is a collection of $\rqfo$ constraints, we say that an $\rqfo$
query $Q$ over $\S$ is \emph{determined over $\T$ relative to $\Sigma$} if:

\begin{quote}
For any two instances $I$ and $I'$ that satisfy $\Sigma$ 
and have the same interpretation of all relations in $\T$ (that is, they have the same $T$-facts for each $T \in \T$)
$\interpret{Q}{I}=\interpret{Q}{I'}$.
\end{quote}

\begin{myexmp} \label{ex:det}
We consider the case where  the vocabulary consists of
the  $\department$ and $\employee$ relations, the constraints 
$\Sigma$ are  inclusion dependencies from $\employee$ to $\department$ (on $\did$)
and from $\department$ to $\employee$ (on $\mgrid$ and $\did$), while the
target signature $\T$ consists of only  $\department$.

Suppose our query $Q$ asks for the ids of all employees. We claim that $Q$ is \emph{not} determined over $\T$ relative to $\Sigma$.
To see this, consider two instances $I$ and $I'$, such that $I$ consists of facts:
\begin{align*}
\{ \employee(123, \jones , 1112), \employee(134, \smith, 1112), \\
\department(1112, \mathematics, 134)\}
\end{align*}
and $I'$ consists of facts  
\begin{align*}
\{ \employee(134, \smith, 1112),\department(1112, \mathematics, 134)\}
\end{align*}
$I$ and $I'$ both  satisfy
the schema constraints, and they have the same restriction to $\department$, but the output of
$Q$ on $I$ includes employee id $123$, while $Q$ evaluated on $I'$ does not.

We have shown via witnesses $I$ and $I'$  that  $Q$ is not
determined over $\T$ $\wrt$ $\Sigma$.
\end{myexmp}

\noindent A special case of interest is where the relations in the restricted
vocabulary $\T$ are ``virtual tables'' defined by \emph{view definitions}. 
For each $V \in \T$ of arity $n$ there is a corresponding query  $Q_V(x_1 \ldots x_n)$, 
while the constraints $\Sigma$ consist only of the statements 
that a view contains exactly those tuples that satisfy its definition:
$\forall \vec x ~ V(\vec x) \leftrightarrow Q_V(\vec x)$.
In this case, determinacy of another query $Q$ with respect to $\T$ and $\Sigma$
can be rephrased as:

\begin{quote}
For any two instances $I$ and $I'$, if $I$ and $I'$ yield the same results for each $Q_V$,
then they yield the same results for $Q$.
\end{quote}
Segoufin and Vianu use the terminology ``$Q$ is determined by the views $\{ Q_V : V \in \T\}$''. 
We will apply our methodology to the semantic property of determinacy,
turning it into a proof goal, and showing that reformulations can be read off from these proofs witnessing the proof goal.

\myparagraph{Translating the Semantic Property to an Entailment}
Returning to Step (2) of our ``meta-algorithm'' from the introduction, 
we write out the semantic property (in this case, determinacy) as a proof goal. 
We now formalize what this means.

Recall  that an \emph{entailment} is a statement of the form $\lambda(\vec x) \models \rho(\vec x)$, 
which means that for every instance $I$ and any binding $\abind$ of $\vec x$, 
if instance $I$ with binding $\abind$ satisfies $\lambda$ then it satisfies $\rho$.
By convention, we write $\models \rho$ to mean $\true \models \rho$:
that is, $\rho$ holds in every structure.
Let us extend our original  signature for the constraints $\Sigma$ and 
the query $Q$ by making a copy $R'$ of every relation $R\in \S$. 
Let $Q'$ be the copy of $Q$ on the new relations, and $\Sigma'$ be the copy
of the constraints $\Sigma$ on the new relations.
Our assumption of determinacy of $Q$ can be restated as an entailment:
\begin{align*}
\models \forall \vec x ~ [\Sigma \wedge  \Sigma' \wedge  (\bigwedge_{T \in \T} \forall \vec y  ~ T(\vec y) \leftrightarrow
T'(\vec y) ) \wedge Q(\vec x) \rightarrow Q'(\vec x)]
\end{align*}
Or, rewriting,
\begin{align*}
\Sigma \wedge Q(\vec x) \models 
[(~ \bigwedge_{T \in \T} \forall \vec y  ~ T(\vec y) \leftrightarrow
T'(\vec y) ~ ) \wedge \Sigma' ] \rightarrow Q'(\vec x)
\end{align*}
This is the \emph{entailment associated with determinacy}.

\myparagraph{How Interpolation for the Entailment Gives a Reformulation}
Going back to our general plan:
\begin{enumerate}
\item  We have isolated a \emph{semantic property}  that an  input query $Q$  must have
with respect to the target $\T$
and constraints $\Sigma$ in order to have a relativized-quantifier first-order reformulation: 
it must be determined over $\T$ \wrt $\Sigma$.
\item We have expressed this property as an entailment:
\begin{align*}
\Sigma \wedge Q(\vec x) \models [(\bigwedge_{T \in \T} \forall \vec y  ~ T(\vec y) \leftrightarrow T'(\vec y) ) \wedge \Sigma']
\rightarrow Q'(\vec x)
\end{align*}
\end{enumerate}
We now  relate this  to \emph{interpolants}.

If $\lambda \models \rho$ is an entailment, 
an \emph{interpolant} is a formula $\chi$, such that:

\begin{itemize}
\item $\lambda \models \chi$ and $\chi \models \rho$ 
\item Every relation in $\chi$ occurs in both $\lambda$ and $\rho$.
\end{itemize}

\emph{Interpolation theorems} state that any entailment for formulas in a logic $L$ has an interpolant
in logic $L$, possibly satisfying additional conditions.

To instantiate the meta-algorithm for this setting, we need to:

\begin{itemize}
\item show that an interpolant for this entailment gives the desired reformulation
\item develop an algorithm to extract these interpolants
\end{itemize}

We start with the first item, the argument that an interpolant for the entailment above represents the first-order reformulation we want:
\begin{proposition}[Variation of \cite{craig57beth}] \label{prop:interpgivesreform} 
Suppose $Q$ is a first-order logic formula and  $\Sigma$ is a first-order sentence.
Let $\chi$ be any interpolant for the entailment
\begin{align*}
\Sigma \wedge Q \models
[(\bigwedge_{T \in \T} \forall \vec y  ~ T(\vec y) \leftrightarrow
T'(\vec y)) \wedge \Sigma' ] \rightarrow Q'
\end{align*}
Then $\chi$  uses only the relations in $\T$, and is equivalent to $Q$ for structures that satisfy $\Sigma$.
\end{proposition}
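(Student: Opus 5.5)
The proof has three parts: first the vocabulary of $\chi$, then that $\Sigma \wedge Q$ implies $\chi$, and finally the converse, which uses a renaming symmetry.

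\textbf{Vocabulary of $\chi$.} By definition of an interpolant, every relation of $\chi$ occurs in both sides of the entailment. The left side $\Sigma \wedge Q$ mentions only relations of $\S$. The right side mentions the primed copies $R'$ of the relations in $\S$, together with the relations $T \in \T$. The only unprimed relations on the right are those in $\T$. Hence $\chi$ uses only relations in $\T$.

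\textbf{$Q$ implies $\chi$ under $\Sigma$.} The first interpolant condition gives $\Sigma \wedge Q \models \chi$. So in any structure satisfying $\Sigma$, and for any binding of $\vec x$, $Q(\vec x)$ implies $\chi(\vec x)$.

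\textbf{$\chi$ implies $Q$ under $\Sigma$.} The second interpolant condition gives
\[
\chi \models [(\bigwedge_{T \in \T} \forall \vec y ~ T(\vec y) \leftrightarrow T'(\vec y)) \wedge \Sigma'] \rightarrow Q'.
\]
Take a structure $M$ for $\S$ satisfying $\Sigma$ and a binding $\abind$ with $M, \abind \models \chi$. Expand $M$ to the extended signature by interpreting each primed relation $R'$ exactly as $R$ is interpreted in $M$. Call the expansion $M^+$; it is a structure for the signature of the entailment.

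\begin{itemize}
\item $M^+$ satisfies each biconditional $\forall \vec y ~ T(\vec y) \leftrightarrow T'(\vec y)$ trivially.
\item $M^+$ satisfies $\Sigma'$, because $\Sigma'$ read in $M^+$ is just $\Sigma$ read in $M$.
\item $M^+, \abind \models \chi$, because $\chi$ mentions only relations of $\T$, which are unchanged in $M^+$.
\end{itemize}

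The entailment therefore gives $M^+, \abind \models Q'$. Since $Q'$ is evaluated on copies identical to the originals, this means $M, \abind \models Q$.

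Combining the last two parts, $\Sigma \models \forall \vec x ~ (Q(\vec x) \leftrightarrow \chi(\vec x))$, which is the claimed equivalence on structures satisfying $\Sigma$.

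The only delicate point is this converse direction, specifically that the entailment applies to $M^+$. This needs $\chi$ to be untouched by the expansion, which is exactly what the vocabulary part guarantees. It also needs the free variables $\vec x$ to be shared between $\chi$, $Q$ and $Q'$. Both are immediate from the form of the entailment, so the argument is a direct semantic renaming with no real obstacle.
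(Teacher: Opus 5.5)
Your proof is correct and follows the same approach as the paper. The vocabulary claim comes from the common-relations condition, $\Sigma \wedge Q \models \chi$ gives one direction, and interpreting each primed relation as its unprimed counterpart (your expansion $M^+$) turns the second interpolant condition into $\Sigma \wedge \chi \models Q$; you simply spell out the semantic details the paper leaves as ``easy to see.''
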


In particular, if $\chi$ is an interpolant in relative-quantifier first-order logic, we have
found an RQFO reformulation of $Q$ relative to $\Sigma$.

The proof of the proposition is quite straightforward:
by the definition of an interpolant, $\chi$ uses only the relations common to the left and right, which
are those in $\T$. And it is easy to see that $\Sigma$ entails that $\chi$ is equivalent to $Q$. For one direction
we use that $\Sigma \wedge Q \models \chi$, by the definition of an interpolant. For the other part of the equivalence, using the fact that $\chi$ entails the right side, and setting the primed and unprimed relations equal we see
that $\Sigma \wedge \chi \models Q$.

We have thus completed all the steps of the meta-algorithm, provided that we 
have a way to get $\rqfo$ interpolants from an entailment between $\rqfo$ formulas.

\subsection{The Interpolation Theorem We Need: Relativized-Quantifier Interpolation} \label{subsec:craigadom}
We discuss the modification of Craig interpolation needed to take
a proof witnessing an entailment for relativized-quantifier first-order formulas and
produce an $\rqfo$ interpolant.
This will allow us to instantiate the meta-algorithm for $\rqfo$  constraints:

The interpolation result we need is a relativized-quantifier version of Craig interpolation, 
a variant of a result proven originally by Martin Otto.

\begin{theorem}[\cite{otto}][Relativized-Quantifier Craig Interpolation Theorem] \label{thm:adomcraig}
If $\lambda$ and $\rho$ are $\rqfo$ formulas such that $\lambda \models \rho$, 
then there is an interpolant $\chi$ in $\rqfo$.
Furthermore, if $\lambda$ and $\rho$ do not use equality, neither does $\chi$.
\end{theorem}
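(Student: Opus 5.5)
We give a proof-theoretic argument in the style of the meta-algorithm. Fix a sound and complete tableau (or sequent) calculus for first-order logic. We adapt it to $\rqfo$ formulas in NNF, using relativized quantifier rules. The interpolant is then extracted by induction on a closed proof of $\lambda \wedge \neg\rho$, in the usual Maehara/Smullyan manner. First convert $\lambda$ and $\neg\rho$ to NNF. Since base formulas are $\true$, $\false$, (negated) atoms and equalities, the NNF formulas are again $\rqfo$. Quantifiers occur only in the two forms $\exists \vec x\, R(\vec s,\vec x)\wedge\phi$ and $\forall \vec x\, R(\vec s,\vec x)\rightarrow\phi$.

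The calculus has the following rules.
\begin{itemize}
\item The $\exists$-rule introduces fresh witnesses $\vec c$ and adds both $R(\vec s,\vec c)$ and $\phi(\vec s,\vec c)$.
\item The $\forall$-rule may be applied to terms $\vec t$ only when the atom $R(\vec s,\vec t)$ is already present on the branch, and then adds $\phi(\vec s,\vec t)$.
\item Closure is by complementary atoms, $\false$, or the usual equality rules. The equality rules are omitted in the equality-free case.
\end{itemize}

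The first step is completeness of this restricted calculus. Suppose a fair branch never closes. Build the model whose domain consists of the terms on the branch, with exactly the atoms on the branch true. A relativized universal then only needs to be checked on tuples satisfying its guard. Those are exactly the tuples present on the branch, and these are the ones the $\forall$-rule has instantiated. Hence the branch is satisfied, which gives completeness. Equality is handled by quotienting by the derived congruence.

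The second step is the interpolation induction. Each formula on the branch is labelled L (coming from $\lambda$) or R (coming from $\neg\rho$). For each node with L-set $\Gamma_L$ and R-set $\Gamma_R$ we construct an $\rqfo$ formula $\theta$ with $\Gamma_L\models\theta$ and $\Gamma_R\models\neg\theta$. The relations of $\theta$ are common to both sides, and its parameters are terms occurring on both sides. The cases are as follows.
\begin{itemize}
\item Closure by an L-atom against an R-negated atom gives $\theta$ equal to that atom. Closure within one side gives $\false$ or $\true$.
\item Propositional branching gives $\vee$ or $\wedge$ of the child interpolants.
\item The $\exists$-rule on a side leaves the interpolant unchanged. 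Fresh witnesses that end up in $\theta$ are handled in the next step.
\end{itemize}

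The main obstacle is the quantifier step. Classically, a parameter $c$ of $\theta$ that is not common (or is an eigenvariable) is removed by writing $\exists c$ or $\forall c$ around $\theta$. Here we must instead produce a relativized quantifier, so we need a guard atom over a common relation. Suppose $c$ was introduced by an $\exists$-rule on the L-side with guard $R(\vec s,c)$ and ends up shared with the R-side. The sharing can only arise through a $\forall$-instantiation on the R-side, and that instantiation was licensed by an atom containing $c$ appearing on the R-side. Tracing why that atom appears, the guard is an atom in $c$ which occurs on the side of $c$'s origin. If its relation is common, we quantify $\exists c\,(R(\vec s,c)\wedge\theta)$. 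If it is not, we choose the proof normal form so that the atom enabling the cross-side instantiation is itself a common atom. To arrange this, we order the rules so that the first cross-side use of a term is mediated by a common atom. The guard then becomes the interpolant's quantifier range, dually with $\forall c\,(R\rightarrow\theta)$.

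I expect to spend most effort proving this invariant: every term in the shared language of a node has a common-relation guard atom already present. It is proved by induction along the branch, simultaneously with the interpolant construction.

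Finally, no rule introduces equality unless equality is present in the input. So in the equality-free case the extracted $\chi$ is equality-free, and the root node yields the interpolant $\chi\in\rqfo$.
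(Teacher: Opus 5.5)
Your overall route (a tableau whose $\forall$-rule may only instantiate along a guard atom already on the branch, a Hintikka-style completeness argument, and bottom-up extraction of separating formulas) is the tableau approach the paper points to, and the completeness part is fine. The gap is in the interpolation step for your own $\forall$-rule. That rule has a side premise, the licensing atom $G(\vec s,\vec t)$. When this atom lies on the opposite side from the universal formula, the inference moves information across the partition. The interpolant must then absorb that atom whether or not any parameter has to be eliminated, and none of your cases does this.

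Concretely, take $\lambda=\exists x\,B(x)$ and $\rho=\exists y\,B(y)$. The L-side gets $B(c)$. The R-side formula $\forall y\,(B(y)\rightarrow\false)$ is instantiated at $c$ via that L-atom, and the R-side closes on $\false$. The leaf interpolant is $\true$, and $c$ does not occur in it, so nothing in your procedure changes it. The root therefore returns $\true$, which does not entail $\rho$. With $\lambda=B(c)$ and $\rho=\exists y\,B(y)\vee D(c)$ the same failure occurs, even though $c$ is already common and no quantification is needed; the interpolant must be $B(c)$. Note also that at the first cross-side use of $c$ the licensing atom must be on the L-side, not the R-side as you wrote; otherwise $c$ would already be shared.

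The missing rule is local to the instantiation. Suppose an R-side $\forall\vec x\,(G(\vec s,\vec x)\rightarrow\psi)$ is instantiated at $\vec t$ via an L-side atom $G(\vec s,\vec t)$, and the child has interpolant $\theta$. The parent gets $G(\vec s,\vec t)\wedge\theta$. Those components of $\vec t$ not yet on the R-side are then replaced by variables bound with the same atom, giving $\exists\vec x\,(G(\vec s,\vec x)\wedge\theta)$, binding only those positions. Dually, an L-side universal licensed by an R-side atom yields $G(\vec s,\vec t)\rightarrow\theta$, relativized to $\forall\vec x\,(G(\vec s,\vec x)\rightarrow\theta)$. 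A same-side licence leaves $\theta$ unchanged.

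So the guard is the licensing atom of that particular instantiation, not the atom from $c$'s $\exists$-origin, and the latter choice is unsound. For $\lambda=\exists x\,(A(x)\wedge B(x))$ and $\rho=\exists y\,B(y)\vee\exists z\,(A(z)\wedge D(z))$, the origin guard $A(c)$ is even common. Yet the origin-guard recipe gives $\exists x\,A(x)$ (or $\true$, if nothing is done because $c$ is not in $\theta$), and neither entails $\rho$.

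Moreover, $G$ is automatically a common relation: it is a guard on one side and an atom on the other. The terms $\vec s$ are automatically common too. So the proof normal form, the rule ordering, and the invariant you planned to spend most effort on are unnecessary; with the rule above the induction is routine and introduces no equality. Finally, your treatment of equality currently covers only completeness. You still need to say how cross-side uses of the equality rules are reflected in the interpolant, since a term can cross sides through an equation without any relational guard.
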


In \cite{interpbook}, an effective version of this was proven, where we assume a proof of the entailment in a certain proof system. In \cite{interpbook} we use analytic tableaux as a proof system: this proof system and the interpolation algorithm for it are discussed in
\refchapter{chapter:firstorder}. But in the results we  will just refer to a \emph{suitable proof system}, leaving the details to the references.

\begin{theorem}[Effective Relativized-Quantifier Craig Interpolation Theorem] \label{thm:effadomcraig}
If $\lambda$ and $\rho$ are $\rqfo$ formulas \emph{and} we have a suitable proof of $\lambda \models \rho$, 
then \emph{in polynomial time in the proof} we can find an interpolant $\chi$ in $\rqfo$.
Furthermore, if $\lambda$ and $\rho$ do not use equality, neither does $\chi$.
\end{theorem}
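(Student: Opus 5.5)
The plan is to fix a concrete proof system, a closed analytic tableau calculus in the style of the one discussed in \refchapter{chapter:firstorder}, and run the standard Maehara/Smullyan interpolation procedure on it, adapted to $\rqfo$. We first put $\lambda$ and $\neg\rho$ into NNF; since $\rqfo$ is closed under negation, they stay in $\rqfo$, with quantifiers guarded by atoms. A proof of $\lambda \models \rho$ is then a closed tableau for $\lambda \wedge \neg\rho$. Every formula on a branch is tagged as a \emph{left} formula (descending from $\lambda$) or a \emph{right} formula (descending from $\neg\rho$). The tableau rules are the usual propositional ones plus rules for relativized quantifiers:
\begin{itemize}
\item For $\exists \vec x ~ R(\vec s,\vec x) \wedge \phi$, introduce fresh parameters $\vec c$ and add $R(\vec s,\vec c)$ and $\phi(\vec s,\vec c)$.
\item For $\forall \vec x ~ R(\vec s,\vec x) \rightarrow \phi$, instantiate only with a tuple $\vec t$ such that $R(\vec s,\vec t)$ already occurs on the branch.
\end{itemize}
This guarded restriction on the $\forall$-rule is what makes a ``suitable'' proof system. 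Completeness of this restricted calculus for $\rqfo$ is taken as given from the references.

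We then define, by induction on the closed tableau from the leaves upward, an interpolant for each node. The invariant is that for the set $L$ of left formulas and the set $R$ of right formulas on the branch, the node's formula $\chi$ satisfies $L \models \chi$ and $\chi \models \neg\bigwedge R$. Its relations and parameters must be common to $L$ and $R$, except for parameters that will be bound later.
\begin{itemize}
\item Leaves: closure on a left/left clash gives $\false$; a right/right clash gives $\true$; a left/right clash on an atom $A$ gives $A$ or $\neg A$.
\item Propositional branching on a left formula combines the children's interpolants by $\vee$; branching on a right formula combines them by $\wedge$.
\item Existential rule: the fresh parameters do not occur in the conclusion, so the child's interpolant can be passed up unchanged.
\end{itemize}

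The key step is the $\forall$-rule, where instantiating with $\vec t$ may introduce parameters into the child interpolant $\chi'$ that are not shared by both sides. In the classical proof one would simply quantify these out with an unrestricted $\exists$ or $\forall$. Here we must use a \emph{relativized} quantifier. Suppose a left $\forall$ formula is instantiated using a guard atom $R(\vec s,\vec t)$ on the branch, and some parameters $\vec u$ in $\vec t$ appear in $\chi'$ but not on the right side. Then $R$ occurs on the left, and that guard atom is a left formula. We set $\chi = \forall \vec u ~ R(\ldots\vec u\ldots) \rightarrow \chi'$ when the guard is a right formula, and $\chi = \exists \vec u ~ R(\ldots\vec u\ldots) \wedge \chi'$ when the guard is a left formula, with the symmetric choices for right-formula instantiations. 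If the guard is on the ``wrong'' side, so that $R$ is not shared, we instead bind the parameters at the node where they were introduced by an $\exists$-rule. That node's generating atom provides a relativizing guard on the correct side.

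The main obstacle is exactly this bookkeeping: showing that every non-shared parameter of a child interpolant can be bound by some guard atom whose relation is common to both sides. The plan is to prove, as part of the inductive invariant, that each parameter in an interpolant is either shared or was introduced by an $\exists$-rule whose guard atom lies on the same branch. Binding then happens at that introduction point, using the guard atom's side to decide between $\exists\ldots\wedge$ and $\forall\ldots\rightarrow$. If a guard's relation is not common, we use the atom from the mixed left/right clash that forced the parameter into $\chi'$ in the first place.

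Equality-freeness is immediate, since the construction only copies atoms from the tableau, and without equality rules no equality atom ever appears. Polynomial time holds because each node contributes a constant-size connective or quantifier block. The interpolant is therefore linear in the tableau size, plus the cost of computing shared-symbol sets.
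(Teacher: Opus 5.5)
Your overall route is the one the paper relies on: a guarded tableau calculus with left/right-tagged formulas, plus a Fitting/Smullyan-style extraction whose only new feature is relativized quantifiers (the paper defers to the tableau algorithm of \cite{interpbook}). But the $\forall$-step, which you rightly call the crux, is mis-analysed and left unresolved, because you have the sides reversed. Keep the strict invariant that a node's interpolant mentions only relations, constants and parameters occurring both among the left and among the right formulas of its branch. After a \emph{left} instantiation of $\forall \vec x\,(R(\vec s,\vec x)\rightarrow\phi)$ via a guard $R(\vec s,\vec t)$, the right set is unchanged. So the child interpolant $\chi'$ never contains a parameter missing from the right. What must be eliminated are the entries $\vec c$ of $\vec t$ that are \emph{not on the left} at the parent. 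These exist only if the guard is a \emph{right} formula, since a left guard already puts all of $\vec t$ on the left. In that case $R$ and $\vec s$ occur both in the left $\forall$-formula and in the right guard, and the other entries of $\vec t$ are on the left by choice of $\vec c$ and on the right via the guard. Hence $\forall \vec u\,(R(\vec s,\vec t[\vec u/\vec c])\rightarrow\chi'[\vec u/\vec c])$ is an $\rqfo$ interpolant. Dually, a right instantiation needs binding only when its guard is a left formula, and the resulting $\exists \vec u\,(S(\ldots)\wedge\chi')$ uses a relation already mentioned by the right $\forall$-formula. So your two formulas ($\forall$ with a right guard, $\exists$ with a left guard) are the correct ones. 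They are needed only at the step where a parameter crosses sides, and there the instantiating guard is automatically shared: the ``wrong side'' case never arises.

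The machinery you introduce for that phantom case should be dropped, together with the relaxed invariant ``except for parameters that will be bound later''. As stated it is not a valid construction, for three reasons. First, binding at the $\exists$-introduction node uses that rule's guard, whose relation need not be shared (consider a left formula $\exists x\,(A(x)\wedge B(x))$ with $A$ absent from $\rho$). Second, the fallback ``atom from the mixed clash'' is not well defined. Clash atoms sit at leaves, can differ between the branches below the binding node, and are not available at that node, so neither $L\models\chi$ nor the vocabulary condition can be checked. Third, it contradicts your own clause that the $\exists$-rule passes the child interpolant up unchanged.

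Separately, the theorem covers $\rqfo$ \emph{with} equality, yet your calculus has no equality rules (so it is not complete for such formulas) and your extraction has no cases for them. When an equation $t_1=t_2$ on one side rewrites a literal $\phi(t_1)$ on the other side into $\phi(t_2)$, the term $t_2$ crosses sides with no guard atom at all, so the relativization step does not apply. One workable fix is to substitute the shared term $t_1$ for $t_2$ in the child interpolant. Your equality-freeness remark only covers the situation where such rules are absent.
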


\myparagraph{Putting It All Together}
We are  ready to give the  instantiation of the general methodology. Instantiating it to this setting, we see that
to find an $\rqfo$ reformulation of $Q$ with respect to subvocabulary $\T$ and constraints $\Sigma$ we should:
\begin{itemize}
	\item Search for a proof that the semantic property determinacy holds. 
	That is, find a proof witnessing
	\[\Sigma \wedge Q(\vec x) \models
	[(\bigwedge_{T \in \T} \forall \vec y  ~ T(\vec y) \leftrightarrow
	T'(\vec y) ) \wedge \Sigma' ] \rightarrow Q'(\vec x)
	\]
	\item Use the relativized-quantifier tableau-based interpolation algorithm to produce an interpolant $Q_\T(\vec x)$.
	\item As  shown in Proposition~\ref{prop:interpgivesreform}, such an interpolant will give the reformulation we want.
\end{itemize}

We also know that to have an $\rqfo$ reformulation, $Q$ must be determined  over $\T$ $\wrt$ $\Sigma$, and thus
there must be a proof of the entailment for determinacy. This means
we have proven the following \emph{effective reformulation theorem}:

\begin{theorem}\label{thm:effectivepbdadom}
To find a reformulation of an $\rqfo$ query $Q$ with respect to subvocabulary $\T$ and $\rqfo$ constraints $\Sigma$, 
it is sufficient to find a proof (in a suitable proof system) witnessing the entailment:
\[
 \Sigma \wedge  Q \models  [(\bigwedge_{T \in \T} \forall \vec y  ~ T(\vec y) \leftrightarrow
        T'(\vec y) ) \wedge \Sigma' ] 
  \rightarrow Q'
\]
where $\Sigma'$ is formed from $\Sigma$ by replacing each relation $R$ with $R'$, 
and $Q'$ is formed similarly from $Q$.

From any such proof, we can effectively produce a reformulation.
\end{theorem}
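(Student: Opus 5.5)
The plan is to chain the two results just established. Suppose we are given a proof, in the suitable proof system, of the entailment
\[
 \Sigma \wedge  Q \models  [(\bigwedge_{T \in \T} \forall \vec y  ~ T(\vec y) \leftrightarrow T'(\vec y) ) \wedge \Sigma' ] \rightarrow Q'.
\]
The first step is to check that both sides of this entailment are $\rqfo$ formulas, so that Theorem~\ref{thm:effadomcraig} applies to them.

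\begin{itemize}
\item The left side is a conjunction of $\rqfo$ formulas, hence $\rqfo$.
\item On the right, $\Sigma'$ and $Q'$ are just renamings of $\rqfo$ formulas, so they are $\rqfo$.
\item Each conjunct $\forall \vec y ~ T(\vec y) \leftrightarrow T'(\vec y)$ should be written as the two relativized universal quantifications $\forall \vec y ~ T(\vec y) \rightarrow (T'(\vec y) \wedge \true)$ and $\forall \vec y ~ T'(\vec y) \rightarrow (T(\vec y) \wedge \true)$. This uses the convention from Section~\ref{sec:prelims} that atoms are special $\rqfo$ formulas.
\item The implication on the right is then a Boolean combination of these pieces, so the whole right side is $\rqfo$.
\end{itemize}

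If the constraints contain constants, I would treat them as part of the shared vocabulary; they cause no difficulty.

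The second step applies Theorem~\ref{thm:effadomcraig} to the given proof. In time polynomial in the proof, it yields an $\rqfo$ formula $\chi(\vec x)$ that is an interpolant for the entailment. The third step invokes Proposition~\ref{prop:interpgivesreform}.

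\begin{itemize}
\item The relations common to both sides are exactly those of $\T$: the left side mentions only unprimed relations of $\S$, and the right side mentions only primed relations plus those of $\T$. Hence $\chi$ uses only relations in $\T$.
\item From $\Sigma \wedge Q \models \chi$ we get one direction of the equivalence.
\item For the converse, take $\chi \models [\ldots] \rightarrow Q'$ and substitute $R$ for $R'$ uniformly. This substitution preserves validity, since entailment is closed under renaming of relation symbols. The $T \leftrightarrow T'$ conjuncts become trivial, $\Sigma'$ becomes $\Sigma$, and $Q'$ becomes $Q$. So $\chi \wedge \Sigma \models Q$.
\end{itemize}

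Thus $\chi$ is an $\rqfo$ reformulation of $Q$ over $\T$ relative to $\Sigma$. It is produced effectively, in fact in polynomial time, from the proof.

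The main obstacle, to the extent there is one, is the bookkeeping of the first step. We must make sure the entailment really lies within the fragment handled by Theorem~\ref{thm:effadomcraig}, and that the free variables $\vec x$ of $Q$ and $Q'$ are handled correctly. The interpolant may have free variables $\vec x$; these are shared by both sides, so this is allowed. Everything else is a direct composition of the earlier theorem and proposition.
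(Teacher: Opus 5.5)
Your proposal is correct and is essentially the paper's argument. You apply the effective relativized-quantifier interpolation theorem (Theorem~\ref{thm:effadomcraig}) to the given proof, and then use Proposition~\ref{prop:interpgivesreform}, whose proof you reproduce (including identifying primed with unprimed relations to get $\Sigma \wedge \chi \models Q$), to conclude that the interpolant is an $\rqfo$ reformulation over $\T$. Your extra check that the entailment lies in $\rqfo$, with each $T \leftrightarrow T'$ split into two relativized universal quantifications, is bookkeeping the paper leaves implicit.
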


Theorem \ref{thm:effectivepbdadom} reduces searching for a reformulation to searching for a proof.
Since the existence of a reformulation implies that the entailment above holds, and the entailment above captures determinacy:
\begin{corollary}[Equivalence of Reformulations, Entailments, and Semantic Properties] \label{cor:iffpbd}
An $\rqfo$ query $Q$ has an $\rqfo$  reformulation with respect to subvocabulary
$\T$ and constraints $\Sigma$ if and only if $Q$ is determined over subvocabulary $\T$ with respect to $\Sigma$ 
if and only if the entailment in Theorem~\ref{thm:effectivepbdadom} holds.
\end{corollary}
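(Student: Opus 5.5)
We establish the three conditions in a cycle: reformulation implies determinacy, determinacy implies the entailment, and the entailment implies a reformulation.

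\emph{Reformulation $\Rightarrow$ determinacy.} Suppose $Q_\T$ is an $\rqfo$ reformulation of $Q$ over $\T$ \wrt $\Sigma$. Take instances $I, I'$ satisfying $\Sigma$ with the same interpretation of every $T \in \T$. Since $Q_\T$ mentions only relations of $\T$, and its relativized quantifiers range only over tuples in those relations, its truth value under any binding is the same in $I$ and $I'$. Hence $\interpret{Q}{I}=\interpret{Q_\T}{I}=\interpret{Q_\T}{I'}=\interpret{Q}{I'}$.

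\emph{Determinacy $\Rightarrow$ entailment.} This is the observation already made when the entailment was introduced. Take a structure over $\S \cup \S'$ and a binding of $\vec x$ satisfying $\Sigma$, $Q(\vec x)$, $\Sigma'$ and $\bigwedge_{T\in\T}\forall \vec y~ T(\vec y)\leftrightarrow T'(\vec y)$. Split it into two instances: $I$, read off from the unprimed relations, and $I'$, read off from the primed relations renamed back to $\S$. Both satisfy $\Sigma$ and they agree on $\T$, so determinacy gives $\interpret{Q}{I}=\interpret{Q}{I'}$, and therefore $Q'(\vec x)$ holds.

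One point needs care here: whether determinacy quantifies over arbitrary structures or only over finite instances. I would read the definitions, as the paper does, with classical (unrestricted) semantics. Then the split structures are legitimate witnesses, and because $\rqfo$ is domain-independent, passing between instances and structures is harmless.

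\emph{Entailment $\Rightarrow$ reformulation.} By Theorem~\ref{thm:adomcraig}, the entailment between $\rqfo$ formulas has an $\rqfo$ interpolant $\chi$. Proposition~\ref{prop:interpgivesreform} then shows that $\chi$ uses only relations in $\T$ and is equivalent to $Q$ on structures satisfying $\Sigma$, so $\chi$ is the required reformulation. If we want the effective version, completeness of the suitable proof system turns the entailment into a proof, and Theorem~\ref{thm:effectivepbdadom} extracts the reformulation from it.

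The main obstacle is the finite-versus-arbitrary structure issue in the second step. Craig interpolation fails over finite structures, so the corollary must be read with determinacy over all (possibly infinite) instances. Under that reading each step is immediate from earlier results.
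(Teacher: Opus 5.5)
Your proposal is correct and takes essentially the paper's route. The paper likewise gets entailment $\Rightarrow$ reformulation from relativized-quantifier interpolation plus Proposition~\ref{prop:interpgivesreform}, packaged as Theorem~\ref{thm:effectivepbdadom}, and treats the remaining implications as immediate; the only difference is cosmetic, since the paper goes from a reformulation straight to the entailment instead of passing through determinacy as your cycle does. Your caveat that determinacy must be read over arbitrary rather than only finite instances is apt and matches the paper's unrestricted semantics.
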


\section{Vocabulary-Based Reformulation with Positive Existential Queries} \label{sec:vocabconjunctive}

Expanding on our general program outlined in the introduction, we show
what happens when we restrict the target language for a rewriting.
Recall  that a \emph{positive existential formula with inequalities} 
($\posexistsineq$ formula) is a formula built up using only $\exists, \wedge, \vee$ from relational atoms
and inequalities. We also consider the formula $\False$ to be positive existential with inequalities.

Given an $\rqfo$  formula $Q$, restricted vocabulary $\T$, and constraints $\Sigma$ given by $\rqfo$ sentences, 
we are interested in getting a $\posexistsineq$ reformulation of $Q$ over $\T$ with respect to $\Sigma$. 
This means we want a $\posexistsineq$ formula over $\T$ that agrees with $Q$ for instances satisfying the constraints.

\myparagraph{The Semantic Property for $\posexistsineq$  Reformulation} 
Following the ``meta-algorithm'' from the introduction, we start by finding the appropriate
semantic property that an input $Q$ must have to admit  a $\posexistsineq$  reformulation.
\cite{NSV} isolated such a property, which we call monotonic determinacy\footnote{\cite{NSV} used the term ``monotone''}.
We say that a query $Q$ is \emph{monotonically-determined over $\T$} relative to $\Sigma$ if:

\begin{quote}
for any two instances $\inst_1, \inst_2$ that satisfy $\Sigma$ and such that
for all relations $T \in \T$, $\interpret{T}{\inst_1} \subseteq \interpret{T}{\inst_2}$, then $\interpret{Q}{\inst_1} \subseteq \interpret{Q}{\inst_2}$.
\end{quote}

\myparagraph{The Entailment Corresponding to the Semantic Property} 
Proceeding to the second step of the meta-algorithm, we will express this semantic property as an entailment.
Let $\Sigma'$ be a copy of the constraints $\Sigma$ where each occurrence of a relation $R$ in $\S$ has been replaced by a copy $R'$.

Monotonic determinacy of a first-order query $Q$ over $\T$ relative to $\Sigma$ can be restated as saying that the following
sentence holds on all instances:
\begin{align*}
\forall \vec x ~ [\Sigma \wedge  \Sigma' \wedge (\bigwedge_{T \in \T} \forall \vec y ~ T(\vec y) \rightarrow T'(\vec y))
\wedge Q(\vec  x)] \rightarrow Q'(\vec x)
\end{align*}

Above, $Q'$ and $\Sigma'$ are again the result of changing unprimed relations $R$ to their primed counterparts $R'$ within $Q$ and within $\Sigma$ respectively.
Observe that if a $\posexistsineq$  formula  over $\T$ is true on an instance $\inst$, then it is true on any
instance $\inst'$ which only adds tuples to the relations in $\T$.
It is thus easy to see that a sufficient condition for monotonic determinacy of $Q$ over $\T$ \wrt $\Sigma$
is that  $\Sigma$ implies the sentence $\forall \vec x ~ Q(\vec x) \leftrightarrow \phi(\vec x)$, where
$\phi$ is a $\posexistsineq$ formula mentioning only relations in $\T$.

We highlight the difference from the entailment for determinacy: here we only have implication in the ``forward'' direction,
from unprimed to primed, while for determinacy we have implications in both directions.

\myparagraph{Generating $\posexistsineq$  Reformulations from Proofs of the Entailment}
We will show that by applying interpolation to proofs of the entailment for monotonic determinacy, 
we get $\posexistsineq$ reformulations.
In doing so, we will prove the following analog of Theorem~\ref{thm:adomcraig}:

\begin{theorem}\label{thm:plp}
If the constraints $\Sigma$ are in $\rqfo$  and $\rqfo$ query $Q$ is monotonically-determined over $\T$, 
then there is a $\posexistsineq$ formula $\phi(\vec x)$ using only relations in $\T$ such
that $\Sigma \models \forall \vec x ~ [ Q(\vec x) \leftrightarrow \phi(\vec x)]$.
Furthermore, if the constraints and the query $Q$ do not make use of equality, 
then $\chi$ can be taken to be positive existential (without inequalities).
\end{theorem}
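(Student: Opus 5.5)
We follow the meta-algorithm with the entailment for monotonic determinacy, now using a \emph{polarity-respecting} interpolation theorem in place of Theorem~\ref{thm:adomcraig}. First rewrite monotonic determinacy as
\[
\Sigma \wedge Q(\vec x) \models \Big[\big(\bigwedge_{T\in\T}\forall \vec y~ T(\vec y)\rightarrow T'(\vec y)\big)\wedge \Sigma'\Big]\rightarrow Q'(\vec x).
\]
The only relations common to both sides are those of $\T$. On the right, each $T\in\T$ occurs only negatively: it appears only in the antecedent of $T(\vec y)\rightarrow T'(\vec y)$, which itself sits in the antecedent of the main implication. Hence in NNF every occurrence of $T$ on the right is positive. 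Lyndon's refinement of Craig interpolation says that an interpolant can be chosen so that every relation occurring positively (resp.\ negatively) in it occurs positively (resp.\ negatively) in both $\lambda$ and $\rho$. Therefore there is an interpolant $\chi(\vec x)$ over $\T$ in which every $\T$-atom occurs only positively. As in Proposition~\ref{prop:interpgivesreform}, $\Sigma\wedge Q\models\chi$. Conversely, identifying each primed relation with its unprimed copy makes the antecedent true under $\Sigma$, so $\Sigma\wedge\chi\models Q$. Thus $\Sigma\models\forall\vec x~Q\leftrightarrow\chi$.

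Positivity in the relations is not yet enough: $\chi$ may contain universal quantifiers, and even relativized ones introduce negative relational guards. So we need the sharper statement that the interpolant is existential-positive. We would prove a relativized Lyndon interpolation for $\rqfo$, refining the tableau interpolation algorithm behind Theorem~\ref{thm:effadomcraig} so that it tracks polarity. Consider a relativized quantifier $\forall \vec z~T(\vec s,\vec z)\rightarrow\phi$ arising in the interpolant. Its guard $T$ would appear negatively, and such a quantifier can only be produced when the guard atom is used with negative polarity on the side contributing shared symbols. On the right side the $\T$-atoms are only positive, so we must show that only existential guarded quantifiers $\exists\vec z~T(\vec s,\vec z)\wedge\phi$ are produced. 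The base cases then leave atoms, $\true$, $\false$, and equalities or inequalities between variables.

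\textbf{Main obstacle.} The hard step is this claim: in the polarity-tracking construction, no universal quantifiers and no negated $\T$-atoms survive. Negated equalities are unavoidable, which is why the theorem allows inequalities. For the equality-free case, the same analysis yields no equality literals at all. The positive equalities that do arise can be eliminated by substitution or kept, since they are positive existential.

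If the syntactic route gets stuck, we would fall back on a semantic argument, a Lyndon-style preservation theorem. Monotonic determinacy says that $Q$, modulo $\Sigma$, is preserved under adding $\T$-tuples between models of $\Sigma$. A compactness and saturation argument would then show that $Q$ is equivalent modulo $\Sigma$ to a disjunction of existential-positive diagrams over $\T$. Here inequalities between witnesses are needed because the passage between instances need not be injective on the active domain.
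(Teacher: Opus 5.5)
The step you label the main obstacle is left unproved. It is the entire content of the paper's proof beyond invoking Theorem~\ref{thm:lit}, yet it needs neither an inspection of the tableau construction nor a compactness fallback.

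Use Theorem~\ref{thm:lit} purely as a black box. It yields an $\rqfo$ interpolant $\chi$ whose relations all belong to $\T$. Each of them occurs only positively, since no $T\in\T$ occurs negatively on your right-hand side. Convert $\chi$ to NNF. This stays inside $\rqfo$, because relativized $\exists$ and $\forall$ are dual, and it preserves the polarity of every atom occurrence. In NNF a universal quantifier sits under no negation. Being relativized, it has the form $\forall\vec z\,(T(\vec s,\vec z)\rightarrow\phi)$ with $T\in\T$, so its guard is a negative occurrence of $T$ in $\chi$, which is impossible. The same holds for a negated atom $\neg T(\vec s)$. Hence the NNF is built from relational atoms, equalities, inequalities, $\true$ and $\false$ using only $\wedge$, $\vee$ and relativized $\exists$; that is, it is a $\posexistsineq$ formula. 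In the equality-free case, Theorem~\ref{thm:lit} already makes $\chi$ equality-free, so no inequalities occur.

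Your own remark that relativized universals ``introduce negative relational guards'' is therefore the proof rather than an obstacle. Relativization is exactly what turns the Lyndon polarity guarantee into existential-positivity. That step fails for the classical Lyndon interpolant of your first paragraph: for instance, $\forall x\,\exists y\,T(x,y)$ is positive in $T$ but not existential.

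The rest is in order. Placing the inclusions $T\rightarrow T'$ on the right makes the shared vocabulary the unprimed $\T$, so the interpolant is directly the reformulation. This works as well as the paper's choice of placing them on the left, where the shared vocabulary is the primed copies, which occur only positively there. Your two directions of the equivalence are also correct.

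There are two small slips. First, you say each $T$ occurs ``only negatively'' on the right. In fact it sits under two negations and so occurs only positively, as you then conclude. Second, your fallback gets the role of inequalities backwards. Monotonic determinacy compares instances along the identity map, which is injective. Inequalities are admissible precisely because $\posexistsineq$ formulas are preserved under injective homomorphisms; non-injective maps could destroy them.
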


We refer to this as the \emph{Projective Monotone Preservation Theorem}. 
 The adjective ``Projective'' emphasizes that we are dealing with a subset of the signature, 
as opposed to many preservation theorems one encounters in logic textbooks and papers, 
which deal with syntactically characterizing a semantic property involving the entire  signature.

The steps in proving Theorem~\ref{thm:plp} will follow the meta-algorithm.
We will prove  a modification of the Craig Interpolation Theorem, the Relativized-Quantifier Lyndon Interpolation Theorem, 
and our proof will provide an algorithm for generating interpolants of a special form from an entailment.
We then show that the interpolants produced by this algorithm, 
when applied to a re-arrangement of the entailment corresponding to the semantic property of monotonic determinacy,
will give us the $\posexistsineq$  reformulation of our query $Q$ with respect to the constraints.

\myparagraph{The Interpolants Required for $\posexistsineq$  Reformulation}
Before giving the argument, we provide some motivation.
We can restate the entailment for monotonic determinacy as:
\[
 \Sigma  \wedge (\bigwedge_{T \in \T} \forall \vec y ~ T(\vec y) \rightarrow T'(\vec y) ) \wedge Q(\vec x)
\models (\Sigma' \rightarrow Q'(\vec x))
\]
As before, $\Sigma'$ is a copy of the constraints on primed versions of the relations, 
and $Q'$ is a copy of the query $Q$ on primed versions of the relations.
The common vocabulary on the left and right consists of exactly the relations $T'$ for $T \in \T$.

Further, we note that these common relations only occur on the right of an implication on the left-hand side.
Writing out the implication $A \rightarrow B$ as $\neg A \vee B$, we see that the common relations would not occur within a negation on the left side.
Informally, we can say that these relations ``occur positively'' on the left hand side in the original formula.
Thus we want to show that the interpolant will also contain these common relations positively.
To do this, we need a formal definition of ``occurring positively'' that applies to arbitrary  first-order formulas,
and a version of interpolation that connects the relations occurring positively in the interpolant with those that occur
positively in both sides of the entailment.

Formally, we say that a relation \emph{occurs positively} in a
first-order formula if it occurs in the scope of an even number of negations, 
when we rewrite the formula to use only the quantifiers and the connectives $\wedge, \vee, \neg$.
A relation occurs  negatively in a formula if it occurs in the scope of an odd number of negations.
Intuitively if a relation occurs only positively, then  the set of solutions to the formula can only increase or stay the same
as tuples are added to that relation (this is easy to check if the relation does not appear under any negations at all).  

The interpolation theorem  that tracks which occurrences are positive is the following result, 
which is a strengthening of the Relativized-quantifier Craig Interpolation Theorem, 
Theorem~ \ref{thm:adomcraig}:

\begin{theorem}[Relativized-quantifier Lyndon Interpolation Theorem]
 \label{thm:lit}
Suppose $\lambda$ and $\rho$ are in $\rqfo$ and $\lambda \models \rho$.
Then there is an $\rqfo$ interpolant $\chi$ for the entailment
such that  a relation occurs positively in $\chi$ only if it occurs positively in both $\lambda$ and $\rho$,
and a relation occurs negatively in $\chi$ only if it occurs negatively in both $\lambda$ and $\rho$.
Furthermore
if equality does not occur in $\lambda$ or $\rho$, then it does not occur in $\chi$, so in particular
$\chi$ can not contain inequalities.
\end{theorem}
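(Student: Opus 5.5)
We will prove this effectively, following the meta-algorithm: take a tableau proof of $\lambda \models \rho$, i.e.\ a closed tableau for $\lambda \wedge \neg \rho$ where every formula is tagged as coming from the ``left'' side ($\lambda$) or the ``right'' side ($\neg\rho$). We then extract interpolants bottom-up by induction on the closed tableau, exactly as in the effective relativized-quantifier Craig interpolation of Theorem~\ref{thm:effadomcraig}. We track one extra invariant per node. Suppose the node's current left formulas are $\Lambda$ and its right formulas are $\Gamma$, all in NNF. Then the interpolant $\chi$ produced at that node satisfies two conditions. First, $\Lambda \models \chi$ and $\chi \wedge \Gamma$ is unsatisfiable. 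Second, every relation occurring positively in $\chi$ occurs positively in $\Lambda$ and negatively in $\Gamma$, and every relation occurring negatively in $\chi$ occurs negatively in $\Lambda$ and positively in $\Gamma$. Since $\Gamma$ comes from $\neg\rho$, polarities in $\Gamma$ are flipped relative to $\rho$, so at the root this is precisely the Lyndon condition. We first put $\lambda$ and $\neg\rho$ in NNF; polarity of occurrences is preserved by this conversion and by the relativized quantifier form.

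The leaf cases carry the content. A branch closes on a pair $R(\vec t)$ and $\neg R(\vec t)$ (or $\false$, or $t \neq t$). If both literals are on the left, take $\chi = \false$; if both are on the right, take $\chi = \true$. If $R(\vec t)$ is on the left and $\neg R(\vec t)$ on the right, take $\chi = R(\vec t)$: $R$ is positive on the left and negative in $\Gamma$, as required. Symmetrically, if $\neg R(\vec t)$ is on the left and $R(\vec t)$ on the right, take $\chi = \neg R(\vec t)$. Equality closures produce equality or inequality atoms only when equality occurs in the input, which gives the equality-free clause. For inductive steps, $\alpha$-rules and quantifier rules applied to one side pass the interpolant through unchanged. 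A $\beta$-rule on the left combines the child interpolants by $\vee$, and on the right by $\wedge$. These connectives do not change polarity, so the invariant is preserved because the children's formulas are subformulas with the same polarity.

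The main obstacle is the quantifier/term handling together with relativization. When a $\delta$- or $\gamma$-rule introduces a parameter $c$ on one side, and $c$ appears in the child interpolant but is not in the common language, it must be quantified away. We must do this with a \emph{relativized} quantifier, and the guarding atom must itself respect polarity. Concretely, if a right-side $\gamma$-rule instantiates $\forall \vec x\, (R(\vec s,\vec x)\rightarrow \phi)$ with left-only parameters, the tableau branch contains $R(\vec s,\vec c)$ derived from a left formula. We abstract $\chi(\vec c)$ to $\exists \vec x\,(R(\vec s,\vec x)\wedge \chi(\vec x))$. $R$ then appears positively in $\chi$, and we must check that it occurs positively on the left (it guards a left existential or matches a left atom) and negatively in $\Gamma$ (it is the guard of a right universal in NNF, hence under a negation). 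Dually, a universal relativized abstraction $\forall \vec x\,(R(\vec s,\vec x)\rightarrow\chi)$ introduces $R$ negatively, and the guard comes with the opposite polarities. We will follow the treatment of Otto's construction as presented in \cite{interpbook} and verify the polarity of guards case by case. The tableau is organized so that guard atoms are always introduced together with the quantified formula. Everything else is a routine induction, and the construction is polynomial in the proof, which also yields an effective version.
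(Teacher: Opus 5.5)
Your proposal is correct and follows the route the paper indicates. The paper gives no separate argument; it says the Lyndon version is a variant of the tableau-based relativized-quantifier Craig interpolation algorithm, and your polarity invariant, together with the check that the guard atom used to relativize away a one-sided parameter has matching polarities ($R$ positive in $\Lambda$ and negative in $\Gamma$ for the existential abstraction, and dually for the universal one), is exactly what that variant needs. One small correction: the abstraction happens only at a $\gamma$-step that instantiates a formula of one side with parameters that so far occur only on the other side, where the guard atom lives; it never happens at a $\delta$-step, since a fresh $\delta$-parameter is one-sided and so cannot occur in the child interpolant.
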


Again, there is a well-known version of this for the classical semantics of first-order logic, Lyndon's
interpolation theorem~\cite{lyndon59}.

The construction that witnesses the Relativized-quantifier Lyndon Interpolation Theorem, 
Theorem~\ref{thm:lit}, is a variant of the one for the Relativized-quantifier CIT, which is discussed \refchapter{chapter:firstorder}.

\myparagraph{Getting $\posexistsineq$ and Positive Existential Reformulations via Interpolants}
Theorem~\ref{thm:lit} shows that we can extract a certain kind of interpolant from a tableau proof 
witnessing an entailment corresponding to monotonic determinacy.
We are now ready to instantiate the last step of our meta-algorithm,
extracting a reformulation from an interpolant.
This will complete the proof of Theorem~\ref{thm:plp}.

 Apply Theorem~\ref{thm:lit} to the entailment
\[
 \Sigma  \wedge (\bigwedge_{T \in \T} \forall \vec y ~ T(\vec y) \rightarrow T'(\vec y) ) \wedge Q(\vec x)
\models (\Sigma' \rightarrow Q'(\vec x))
\]
 We can conclude that there is an $\rqfo$ interpolant $\gamma$ mentioning
only relations in the primed copy of $\T$, where these relations only occur positively. We can assume
$\gamma$ is built up from $\True$ and $\False$ via connectives and relativized quantifiers $\forall \vec x  ~ R(\vec x) \rightarrow \phi$
and $\exists \vec x ~ R(\vec x) \wedge \phi$.
We claim that any $\rqfo$ formula in which all  relations occur positively must be equivalent to a positive existential formula.
To see this,  convert an $\rqfo$ formula to NNF.
If the resulting formula $\gamma$ had any relativized universal quantifier, then consider an outermost quantification of
the form $\forall \vec x ~ (R(\vec x) \rightarrow \phi)$.
$R$ must occur negatively in this formula.
But then it must occur negatively within $\gamma$, since existential quantification and the positive Boolean operators preserve the polarity of a subformula,
and this contradicts the assumption on $\gamma$.
Hence the Negation Normal Form of $\gamma$ can not contain any relativized universal quantifier, and thus must be $\posexistsineq$.

We say $\posexistsineq$ above, rather than positive existential, because the argument applies
only to relations $R$ of the schema, not equality. 
If the equality symbol does not appear in the entailment we know that it is not generated in the interpolant.
Therefore when equality does not occur in the constraints or the query, 
we can strengthen the conclusion to be that the interpolant is positive existential.

This completes the proof of the Projective Monotone Preservation Theorem, Theorem~\ref{thm:plp}.

\myparagraph{Application to View-Based Query Reformulation}
Let us look at the setting where there are no constraints other than those that come from views defined by conjunctive queries.  
As a  corollary of  Theorem~\ref{thm:plp} we have:

\begin{corollary}
Suppose $Q$ is a CQ and $V_1 \ldots V_n$ are views defined by arbitrary equality-free $\rqfo$ formulas.
Then, $Q$ is monotonically-determined in  $V_1 \ldots V_n$ 
if and only if there is a positive existential reformulation of $Q$ in terms of  $V_1 \ldots V_n$.
\end{corollary}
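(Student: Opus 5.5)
The plan is to derive the corollary from the Projective Monotone Preservation Theorem (Theorem~\ref{thm:plp}), taking $\S$ to be the base relations together with $V_1 \ldots V_n$ and $\T = \{V_1 \ldots V_n\}$. The constraints $\Sigma$ are the view definitions $\forall \vec x ~ V_i(\vec x) \leftrightarrow Q_{V_i}(\vec x)$, split as usual into two implications. The proof has two directions.

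For the easy direction, suppose $\phi$ is a positive existential formula over $V_1 \ldots V_n$ with $\Sigma \models \forall \vec x ~ Q(\vec x) \leftrightarrow \phi(\vec x)$. Positive existential formulas are preserved under adding tuples, as already observed in the text before Theorem~\ref{thm:plp}. So if $\inst_1, \inst_2$ satisfy $\Sigma$ and $\interpret{V_i}{\inst_1} \subseteq \interpret{V_i}{\inst_2}$ for all $i$, then
\[
\interpret{Q}{\inst_1} = \interpret{\phi}{\inst_1} \subseteq \interpret{\phi}{\inst_2} = \interpret{Q}{\inst_2}.
\]
This is exactly monotonic determinacy.

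For the hard direction, I would check that the hypotheses of Theorem~\ref{thm:plp}, including its equality-free clause, are met. Then monotonic determinacy gives a $\posexistsineq$ formula over $\T$, and because equality is absent, that formula is in fact positive existential.

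Three checks are needed.
\begin{itemize}
\item \emph{$\Sigma$ is an $\rqfo$ sentence.} The direction $\forall \vec x ~ V_i(\vec x) \rightarrow Q_{V_i}(\vec x)$ is already a relativized universal quantification. The converse direction $\forall \vec x ~ Q_{V_i}(\vec x) \rightarrow V_i(\vec x)$ is not syntactically relativized. I would rewrite it as $\neg \exists \vec x ~ (Q_{V_i}(\vec x) \wedge \neg V_i(\vec x))$ and relativize the quantifiers on $\vec x$ using the atoms of $Q_{V_i}$, as in the active-domain translation of Section~\ref{sec:prelims}.
\item \emph{$Q$ is an equality-free $\rqfo$ query.} A CQ $\exists \vec y ~ \bigwedge_j A_j$ is $\rqfo$ by nesting relativized existentials over its atoms. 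If the CQ contains repeated variables or constants, these are expressed inside the atoms, not by equality atoms.
\item \emph{The view definitions are equality-free.} This is given by hypothesis.
\end{itemize}

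I expect the main obstacle to be the first check: making the backward view constraint relativized without introducing equality. The issue arises when some free variable of $Q_{V_i}$ is not guarded by an atom. My approach is to first put $Q_{V_i}$ in a form where each free variable occurs in an atom, so that its free variables lie in the active domain. Each variable is then relativized by a disjunction over atoms containing it, using the active-domain translation; this uses no equality. If the views are not safe in this sense, I would add the standard assumption that view definitions are safe (domain-independent), which is implicit in treating them as database views.
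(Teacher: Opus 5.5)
Your proposal is correct and follows the paper's route: the paper obtains this corollary directly from the Projective Monotone Preservation Theorem (Theorem~\ref{thm:plp}). It takes $\Sigma$ to be the view definitions and $\T=\{V_1 \ldots V_n\}$, uses the equality-free clause for the hard direction, and uses the preservation of positive existential formulas under adding tuples for the easy direction. Your additional check that the backward view constraint can be made an equality-free $\rqfo$ sentence, assuming view definitions are safe, is a detail the paper leaves implicit, and you handle it correctly.
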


\section{Vocabulary-Based Existential Reformulation}  \label{sec:vocabexistential}
At this point, we have proven a statement about first-order reformulations and one about  positive existential reformulations.
What about queries that can be reformulated using \emph{existential formulas}? 
That is, formulas that are built up from atoms and negated atoms by positive Boolean operators and existential quantification.
There are conjunctive queries that are equivalent to existential formulas but not to positive existential ones. 
For example, in the absence of any constraints
$\exists x ~ S(x) \wedge \neg R(x)$ is not equivalent to a positive existential formula. 
Can we use a similar technique to detect which formulas are equivalent to an existential formula with
respect to a set of constraints, and if so find such a reformulation? 
We give a positive answer to this below, 
restricting for simplicity to equality-free $\rqfo$.

\myparagraph{The Semantic Property for Existential Reformulation}
As before, let $\Sigma$ be a set of integrity constraints  in equality-free $\rqfo$, 
and $\T$ a subset of the relations of $\aschema$.
We start by isolating a property that $Q$ must have in order to possess an existential reformulation.

We say that a query $Q$  is \emph{\IS-monotonically-determined over $\T$} relative to $\Sigma$ if:

\medskip 

Whenever we have two instances $\inst_1, \inst_2$ that satisfy $\Sigma$ and such that $\inst_1$ is an induced
subinstance of $\inst_2$
then $\interpret{Q}{\inst_1} \subseteq \interpret{Q}{\inst_2}$.

\medskip 
An instance $\inst_1$ is an induced subinstance of $\inst_2$ means that
$\inst_2$ contains all facts of $\inst_1$ and $\inst_2$ does not add any facts over the active domain of $\inst_1$.

Note that if an existential formula over $\T$ is true on an instance $\inst$, 
then it is true on any instance $\inst'$ which only adds tuples to the relations in $\T$ 
and never ``destroys a negated assertion about a relation of $\T$ holding in $\inst$''.
From this we see that if a formula is equivalent to an existential formula under a set of constraints
$\Sigma$, then the formula is  \IS-monotonically-determined over $\T$ $\wrt$ $\Sigma$.

\myparagraph{The Entailment Corresponding to the Semantic Property}
As in the previous cases, we instantiate our meta-algorithm by writing out the semantic property as an entailment.

Let $\indomain_\T(x)$ abbreviate the formula:
\[
 \bigvee_{T \in \T } \bigvee_j \exists w_1 \ldots \exists w_{j-1} ~ \exists w_{j+1} \ldots w_{\arity(T)} ~ T(w_1, \ldots w_{j-1}, x, w_{j+1}, \ldots, w_{\arity(T)})
\]
So $\indomain_\T$ states that $x$ is in the domain of a relation in $\T$.
The  entailment  we need is:
\begin{align*}
Q(\vec x) \wedge \Sigma  \wedge \Sigma' \wedge  
\bigwedge_{T \in \T} (\forall \vec y ~ T(\vec y) \rightarrow T'(\vec y)) \wedge
\bigwedge_{T \in \T} (\forall \vec y ~ \bigwedge_i \indomain_\T(y_i) \wedge  T'(\vec y) \rightarrow T(\vec y) ~)\\[-5pt]
\models\; Q'(\vec x)
\end{align*}
Comparing with the two previous entailments, we have the forward implication as before, 
and a restriction of the backward implication.
It is clear that \IS-monotonicity is equivalent to this entailment holding.

\myparagraph{Extracting Interpolants from a Proof of the Entailment: Statement of Result}
We will show later that if we have a ``suitable'' interpolant for  
the entailment corresponding to \IS-monotonicity, then we can extract an existential
reformulation from it, completing another instantiation of the meta-algorithm.

This will give us a proof of another analog of Theorem~\ref{thm:adomcraig}:

\begin{theorem}\label{thm:plt}
If the constraints $\Sigma$ are in equality-free $\rqfo$,
and $\rqfo$ query $Q$ is \IS-monotonically-determined over $\T$, 
then there is an existential first-order formula $\phi(\vec x)$ using only relations in $\T$ such
that $\Sigma \models \forall \vec x ~ Q(\vec x) \leftrightarrow \phi(\vec x)$.  
\end{theorem}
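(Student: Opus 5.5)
We will follow the meta-algorithm once more, applying Theorem~\ref{thm:lit} to a rearrangement of the entailment for \IS-monotonic determinacy. We first move everything about the unprimed copy to the left and everything about the primed copy to the right:
\begin{align*}
Q(\vec x) \wedge \Sigma \wedge \bigwedge_{T \in \T} (\forall \vec y ~ T(\vec y) \rightarrow T'(\vec y)) \wedge
\bigwedge_{T \in \T} (\forall \vec y ~ \bigwedge_i \indomain_\T(y_i) \wedge T'(\vec y) \rightarrow T(\vec y))
\models (\Sigma' \rightarrow Q'(\vec x)).
\end{align*}
All conjuncts on the left are $\rqfo$; in particular $\indomain_\T$ is a disjunction of relativized existentials. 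Since $\Sigma$ is equality-free, we will also want $Q$ equality-free, or at least arrange that no equality is introduced in the interpolant. The common vocabulary is $\{T' : T \in \T\}$.

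On the left, $T'$ occurs positively in the forward conjuncts and negatively in the backward conjuncts. On the right it occurs with both polarities, so Lyndon polarity alone gives no restriction. The plan is therefore not to rely on polarity but on the \emph{shape} of the occurrences. Every negative occurrence of a $T'$ on the left is guarded: it appears only in a context $\forall \vec y\, (T'(\vec y) \wedge \bigwedge_i \indomain_\T(y_i) \rightarrow \ldots)$, where every $y_i$ lies in the active domain of the unprimed $\T$-relations.

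We will use a refined interpolation algorithm, a variant of the tableau construction behind Theorem~\ref{thm:lit} discussed in \refchapter{chapter:firstorder}. In it, the left-hand formulas that can contribute negated common atoms to the interpolant are only these guarded backward implications. We aim to show that the interpolant $\chi$ can be produced in NNF with the following properties. Every relativized universal quantifier in $\chi$ arises from one of these guarded formulas. Each such quantifier has the form $\forall \vec y\, (T'(\vec y) \rightarrow \ldots)$, where the $\vec y$ are already bound to terms introduced earlier by existential witnesses on the left. Hence each such universal can be replaced by a finite Boolean combination of negated atoms $\neg T'(\vec t)$ over previously quantified variables, and $\chi$ becomes existential.

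If this direct tableau analysis proves too delicate, the fallback is a semantic route that can then be made effective. Here we would use compactness together with a Łoś--Tarski-style argument relativized to $\T$: if $Q$ is \IS-monotonic, then $Q$ is implied under $\Sigma$ by the disjunction of the existential $\T$-diagrams of finite $\T$-subinstances of models of $Q \wedge \Sigma$. Compactness then reduces this to a finite disjunction. The entailment above is exactly what makes the induced-subinstance comparison go through.

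The final step is converting $\chi$ to a reformulation, exactly as in Proposition~\ref{prop:interpgivesreform}. The interpolant mentions only the $T'$; rename them to $T$. For the first direction, $\Sigma \wedge Q \models \chi$: interpret the primed relations as identical to the unprimed ones, so that all the linking conjuncts hold. For the converse, $\chi \wedge \Sigma \models Q$: again take primed equal to unprimed, and use $\chi \models \Sigma' \rightarrow Q'$.

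The main obstacle is the middle step: showing that universal quantification over the common relations in the interpolant can be confined to the guarded form and then eliminated in favour of negated atoms. This requires tracking, in the interpolation algorithm, that the only left formulas feeding negative $T'$-information are the $\indomain_\T$-guarded backward axioms. It also requires checking that the witnesses for $\indomain_\T$ supply existentially quantified variables in the interpolant, rather than universal ones. I would first attempt this by induction on the tableau, strengthening the inductive invariant to say that the partial interpolants are existential formulas whose free variables are covered by $\indomain_\T$-witnesses.
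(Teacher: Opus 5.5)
\textbf{There is a gap: the key step of your main route is never established.} The route is the paper's in spirit, but you only announce the claim that every universal quantifier of the interpolant comes from the guarded backward axioms and is therefore vacuous. You leave it as something you ``aim to show'' by a tableau induction you only sketch, and, as you note, Theorem~\ref{thm:lit} cannot supply it. The missing ingredient is an interpolation theorem that tracks binding patterns. The paper has exactly this in the Access Interpolation Theorem~\ref{thm:access-interpolation}: by clause 4, every universal binding pattern of $\chi$ is covered by a universal binding pattern of $\lambda$. The paper obtains Theorem~\ref{thm:plt} as the vocabulary-based case of Theorem~\ref{thm:uspjnegcomplete}, i.e.\ by applying this theorem to the $\accsneg$ entailment and reading off the patterns.

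With that theorem your own split closes at once. The common relations are the $T'$, and their only universal patterns in $\lambda$ come from the backward axioms. If those patterns are full, the negation normal form of $\chi$ has no non-vacuous universal quantifier, so $\chi$ is existential. By clause 5 it is also equality-free once $Q$ is equality-free, as you rightly require. Your worry about $\indomain_\T$-witnesses becoming universal variables then disappears.

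The one thing you must get right is the $\rqfo$ rendering of the backward axiom. Distribute the disjunctions of $\indomain_\T$ into a conjunction of implications, quantify each $y_i$ through an unprimed guard $U(\ldots y_i \ldots)$ with $U \in \T$, and put $\neg T'(\vec y) \vee T(\vec y)$ innermost. Then $T'$ gets the full pattern $(T', \{1, \ldots, \arity(T)\})$. If instead $T'(\vec y)$ is the guard of $\forall \vec y$, as your text partly suggests, the pattern is $(T', \emptyset)$ and nothing follows. (Following the RA-plan argument, the paper puts the forward axioms on the right and de-primes an interpolant over both copies. Your split, with both linking axioms on the left, works equally well.)

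\textbf{Your semantic fallback is a genuinely different route, but its key sentence is misstated.} It is non-effective, and it would suffice for the existence claim, since the paper's instances may be infinite and compactness is available. As written, though, the key sentence is false. The existential diagram of the empty subinstance is $\True$, so ``$Q$ is implied under $\Sigma$ by'' your disjunction would give $\Sigma \models Q$.

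The right family is the existential $\T$-diagrams $\exists \vec y\, \delta(\vec y)$ with $\Sigma \wedge \exists \vec y\, \delta \models Q$. What has to be shown is that $\Sigma \wedge Q$ entails their disjunction. Suppose some $N \models \Sigma \wedge Q$ falsified all of them. Then compactness makes $\Sigma \cup \{\neg Q\} \cup \mathrm{Diag}_\T(N)$ satisfiable. A model of it contradicts \IS-monotonicity, after you split any elements that the equality-free diagram lets collapse; this is harmless because $\Sigma$ and $Q$ are equality-free. A second compactness step then yields a finite disjunction.

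This route produces $\phi$ only by blind search, not by extraction from a proof of the entailment, and that extraction is what the interpolation route buys.
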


A similar theorem will hold if the constraints are in $\rqfo$
with equality, with the conclusion being
that $\phi$ is an existential formula with inequalities.
There is also an effective variant of the theorem above, an analog
of Theorem \ref{thm:effectivepbdadom}. It states that the existential
$\phi$ can be generated efficiently from a suitable proof of
the entailment for \IS-monotonicity.

We defer the proof of Theorem~\ref{thm:plt} for the moment. 
It will follow from a more general theorem, Theorem~\ref{thm:uspjnegcomplete}, 
proved later.
The reformulation we need will come from applying an interpolation procedure to the entailment above.
\myeat{
\begin{align*}
Q(\vec x) \wedge \Sigma  \wedge \Sigma' \wedge  \\
\bigwedge_{T \in \T} (\forall \vec y ~ T(\vec y) \rightarrow T'(\vec y)) \wedge  \\
\bigwedge_{T \in \T} (\forall \vec y ~ (\bigwedge_i \indomain_\T(y_i) \wedge  T'(\vec y) \rightarrow T(\vec y) ~) \models \\
   Q'(\vec x)
\end{align*}
}
But we need a new interpolation result to guarantee that the interpolant will be existential,
and this extended interpolation theorem is ``Access Interpolation'', described later.

\myparagraph{\lostarski}
The theorem above is closely related to another result in logic, 
the {\lostarski} preservation theorem.
This states that a formula is preserved under extensions exactly when it is equivalent
to an existential formula. 
Roughly speaking, the {\lostarski} theorem is a special case of Theorem~\ref{thm:plt}, 
where $\Sigma$ is empty and $\T$ contains all relations in the schema.
The ``roughly speaking'' disclaimer is because the {\lostarski} theorem deals with classical first-order logic formulas rather than for $\rqfo$, 
and the notion of induced subinstance must be replaced by the analogous notion for structures.

\section{Rewriting with Access Patterns}
Thus far the target of rewriting was specified through vocabulary restrictions.
We wanted a query that used a fixed set of target relations,
perhaps restricted to be positive existential or existential.
We now consider a finer notion of reformulation,
where the target has to satisfy \emph{access restrictions}.

We begin with the definitions of access methods along with a programming language, the RA-plans,
that combines data access by means of a fixed set of access methods with data manipulation using relational algebra queries. 
Recall from Section~\ref{sec:prelims} that relational algebra represents an algebraic programming language that is equivalent in expressiveness
to  first-order logic formulas that are \emph{safe}, in that for every finite instance, the number of satisfying assignments is finite. Implementation of logic-based languages proceeds by  translating a logical
formula into relational algebra.
RA-plans can be thought of as a variation of relational algebra in which we want to abide by a fixed set of data interfaces given
by access methods.  They will thus be the new target of reformulation.

We return to our program of going from semantic properties to rewritings,
following the methodology outlined in the introduction.
We present semantic properties that must hold for a query to be implemented using the interface  given by  a set of access methods, 
and show that these properties can be captured by entailments. 
In the remainder of the section we explain how proofs of these entailments can be converted into plans within our plan languages.
The conversion from proof to plan will again proceed via interpolation. But in this case
we need  a new interpolation theorem tailored to the setting of access methods.

\subsection{Basics of Target Restrictions Based on Access Methods} \label{sec:accessbasics}
We now look at a notion of interface that is closer to the traditional  notion
in programming languages: a set of functions that access the data.
A specification of this interface will  be an extended set of metadata describing both the format of the data (e.g. the vocabulary that would
be used in queries and constraints) and the access methods (functions that interact with the stored data).

An \emph{access schema} consists of:
\begin{itemize}
\item A collection of relations, each of a given arity. 
\item A finite collection $C$ of schema constants ($\smith$, 3, $\ldots$). 
	Schema constants represent a fixed set of values that will be known to a user prior to interacting with the data.
Values that can be used in queries and constraints should be schema constants, as before.
In addition, any fixed values that might be used in plans that implement queries should come from the set of schema constants.
	For example, a plan to answer a query about the mathematics department might involve  first
putting the string ``mathematics'' into a university directory service.
\item For each relation $R$, a collection (possibly empty) of \emph{access methods}\footnote{Our definition of ``access methods'' 
is a variant  of the terminology ``access patterns'' or ``binding patterns'' found in the database literature.}.
Recall that a \emph{position} of a relation $R$ is a number between $1$ and $\arity(R)$.
Each access method $\mt$ is associated with a collection (possibly empty) of positions of $R$ --- the \emph{input positions} of $\mt$.
\item Integrity constraints, which are  sentences in relativized-quantifier first-order logic as before.
\end{itemize}

An \emph{access} (relative to a schema as above) consists of an access method of the schema and a \emph{method binding} ---
a function assigning values to every input position of the method.
If $\mt$ is an access method on relation $R$ with arity $n$,
$I$ is an instance for a schema that includes $R$, and $\accbind$ is a method binding on $\mt$,
then the \emph{output} or \emph{result} of the access $(\mt, \accbind)$ on $I$ is the set of $n$-tuples $\vec t \in \interpret{R}{I}$
such that $\vec t$ restricted to the input positions of $\mt$ is equal to $\accbind$.

An access method may have an empty collection of input positions.
In this case, the only access that can be performed using the method is with the empty method binding.
When a method has no input positions, we say that the access method is ``input-free''.
In Example~\ref{exone} of Section~\ref{sec:textoverview}, the $\univdirectory$ table was assumed
to have such an input-free access method.

The goal is to reformulate source queries in a target language
that represents the kind of restricted computation done over an interface given by an access schema.
We first formalize this operationally, as a language of \emph{plans}. Plans are straight-line programs
that can perform accesses and manipulate the results of accesses using relational algebra expressions.
This language could model, at a high-level,  the plans used internally in 
a database management system. It could also describe the computation done within a data integration system, which might
access remote data via a web form  or web service and then combine data from
different sources  using SQL within its own database management system.

All of our plan languages have as a primitive an \emph{access command}.
Over a schema $\aschema$ with access methods, an access command
 is of the form:
\[
T \Leftarrow_{\outmap} \mt \Leftarrow_{\inmap} E
\]
where: 
\begin{itemize}
\item  $E$ is a relational algebra expression, the \emph{input expression}, 
	over some set of relations not in $\aschema$ (henceforward ``temporary tables'');
\item $\mt$ is a method from $\aschema$ on some relation $R$;
\item $\inmap$, the \emph{input mapping} of the command, is a function from
	the output attributes of $E$ onto the input positions of $\mt$;
\item $T$, the \emph{output table} of the command, is a temporary table;
\item $\outmap$, the \emph{output mapping} of the command, is a bijection from positions of $R$ to attributes of $T$.
\end{itemize}
Note that an access command using an input-free method must take the empty relation algebra expression $\emptyset$ as input.

The manipulation of data retrieved by an access is modeled with the other primitive of our languages,
a  \emph{middleware query command}. These are  of the form $T := Q$, 
where $Q$ is a relational algebra expression over temporary tables and $T$ is a temporary table. 
We use the qualifier ``middleware'' to emphasize
that the queries are performed on temporary relations created
by other commands, rather than on relations of the input schema.

A  \emph{relational algebra-plan} (or simply, \emph{RA-plan}) consists of a sequence of access and middleware query commands, 
ending with  at most one \emph{return command} of the form  $\return~E$, where $E$ is an RA expression.

\begin{myexmp} \label{ex:plan}
We return to Example~\ref{exone} of Section~\ref{sec:textoverview}, 
where we had two sources of information.
One was $\profinfo$,  which was available through an access method $\mt_\profinfo$ requiring input on the first position.
The second was $\univdirectory$, which had an access method $\mt_{\univdirectory}$  requiring no input.
Our query $Q$ asked for ids of faculty named $\smith$.
One plan that is equivalent to $Q$  would be represented as follows 
\begin{align*}
&T_1 \Leftarrow \mt_{\univdirectory} \Leftarrow \emptyset\\
&T_2 := \pi_\eid (\select_{\lname=\smith} T_1)\\
&T_3 \Leftarrow \mt_{\profinfo} \Leftarrow T_2\\
&\return ~ \pi_\eid (T_3)
\end{align*}
Above we have omitted the mappings in writing access commands, since they can be inferred from the context.
We will often do this in plans for brevity.
\end{myexmp}
\myeat{
As syntactic sugar we also allow access commands
$T \Leftarrow_{\outmap} \mt \Leftarrow_{\inmap} E$ where
the number of output attributes of $E$ is strictly greater than
the number of input positions to $\mt$, and the input mapping $\inmap$ is a partial function.
This is just a short-cut for applying a projection on $E$ to the domain of $\inmap$.
}

\myparagraph{Semantics of Plans}
A temporary table is \emph{assigned} in a plan if it occurs on the left side of a command, 
and otherwise is said to be \emph{free}.
The semantics of plans is defined as a function that takes
as input an instance $I$ for $\aschema$ and interpretations of the free tables.
If the plan has no $\return$~statement, the output consists
of interpretations for each assigned temporary table. If the
plan contains a statement $\return~E$, the output is an interpretation of a relation with attributes
for each output attribute of $E$. In the latter case, we refer to this as the \emph{output} of the plan.

An access command $T \Leftarrow_{\outmap} \mt \Leftarrow_{\inmap} E$ is executed by evaluating 
the expression $E$ on $I$ and ``accessing $\mt$ on every result tuple''.
That is, each output tuple of $E$ is mapped to a tuple $t_{j_1} \ldots t_{j_m}$ using the input mapping  $\inmap$.
For each tuple $\vec t=t_1 \ldots t_n \in R$ that ``matches'' (\ie that extends) $t_{j_1} \ldots t_{j_m}$,
$\vec t$ is transformed to a tuple $\vec t'$ using the output mapping $\outmap$. 
The interpretation of $T$ is then the union of all such tuples $\vec t'$.
A middleware query command $T := E$ executes query $E$ on the  contents of the temporary tables mentioned in $E$,
and then assigns the result to temporary table $T$.

A plan is executed by evaluating its commands in sequence, 
with each command operating on the instance formed from the input instance,
adding the interpretations of assigned tables produced by earlier commands.
For a plan having as its final command $\return~E$, 
the output of the plan is the evaluation of $E$ on the instance formed as above. 

We usually assume (without loss of generality) that each table is only assigned once within a plan.
Given plan $\aplan$, temporary table $T$ that occurs in $\aplan$, and an instance $I$ for the schema $\aschema$,
we let $\outcomet{\aplan}{T}{I}$ be the content  of $T$ when $\aplan$ is run on $I$. For plan $\aplan$ including a $\return$ statement,
we let  $\outcomeb{\aplan}{I}$ be the output of $\aplan$ on $I$.
Similarly, given a relational algebra expression $E$ over temporary tables $T_1 \ldots T_n$ of $\aplan$ and instance $I$,
$\outcomet{\aplan}{E}{I}$ represents the result of $E$ when run on $\outcomet{\aplan}{T_1}{I} \ldots \outcomet{\aplan}{T_n}{I}$.

\myparagraph{Fragments of the Plan Language}
We now define fragments of our plan language, analogs
of the fragments of 
relational algebra and first-order logic.
In RA-plans, we allowed arbitrary relational algebra expressions
in both the inputs to access commands and the middleware query commands.
We can similarly talk about \emph{$\spj$-plans}, where the expressions in access and 
middleware query commands are built up from the $\spj$ relational algebra operators
and \emph{$\uspj$-plans} that allow the union operator of relational algebra in addition to $\spj$ operators.
We define \emph{$\uspjneg$-plans} as RA-plans in which relational algebra's difference operator only occurs in a \emph{non-membership check}, 
which tests whether the tuples in a  projection of a temporary table are not in a given relation $R$.
Formally, a non-membership check is a sequence of two commands:
\begin{align*}
T' &\Leftarrow_{\outmap} \mt \Leftarrow_{\inmap} \pi_{a_{j_1} \ldots a_{j_m}}(T)\\
T''&:=T-(T \join T')
\end{align*}

\noindent where in the first command: 
\begin{itemize}
\item $\mt$ is an access method on some relation $R$ with input positions $j_1 \ldots j_m$;
\item the input mapping $\inmap$ maps attribute $a_{j_i}$ to position $j_i$;
\item the attributes of the output table $T'$ are a subset of the attributes of $T$ containing each $a_{j_i}$;
\item the output mapping $\outmap$ maps $j_i$ back to $a_{j_i}$.
\end{itemize}
In the second command, the join condition identifies attributes that have the same name.
$\espj$-plans, $\euspj$-plans, and $\euspjneg$-plans are defined analogously
to the classes above, but not allowing inequality conditions in selections or joins.

\myparagraph{Plans that Answer Queries}
We now define what it means for a plan to correctly implement a query.
Given an access schema $\aschema$, a plan \emph{answers} a query $Q$ \emph{(over all instances)} 
if for every instance $I$ satisfying the constraints of $\aschema$, 
the output of the plan on $I$ is the same as the output of $Q$. 
We often omit the schema from our notation, since it is usually clear from context, 
saying that a plan $\aplan$ answers $Q$.

\myparagraph{From Plans to Specialized Formulas}
We have stated our target language for reformulation as a procedural
language. But to make use of interpolation-based methods, it will be useful to have a logic, consisting of formulas.
We will be able to state an equivalence of plans  in the case of sentences --- that is, Boolean queries. In this case, we can show that  a plan using a given set of access patterns is equivalent to a logical sentence where the quantification structure is compatible with the access patterns. This approach can  be bootstrapped to general logical formulas, but we defer the discussion here, see \cite{interpbook}.

An $\rqfo$ sentence is \emph{executable } (relative to an access schema $\aschema$) if it is built up from equalities
and the formula $\true$ using
arbitrary Boolean operations and the quantifiers:
\begin{align*}
&\forall \vec y ~ [R(\vec x, \vec y)  \rightarrow \phi(\vec x, \vec y, \vec z)] \\[1mm]
&\exists \vec y ~ R(\vec x, \vec y)   \wedge \phi(\vec x, \vec y, \vec z) 
\end{align*}
and for any such quantification above, if $R$ is a $\aschema$ relation,
then  $R$ has an access method $\mt$ such that
 all of the input positions of $\mt$ are occupied by some $x_i$ (that is: by a free variable or constant).

This definition
captures the informal idea that the sentence is compatible
with the access methods:  we should be quantifying only over the output positions, while the values input
positions should be provided.

The next proposition says that executable $\rqfo$ sentences
we can  find a plan that filters an input table down to the subset satisfying the formula.

\begin{proposition} \label{prop:execbool} 
There is a linear time procedure that takes as input an executable $\rqfo$ sentence $\phi$  and produces
an equivalent Boolean  RA-plan $\plan_\phi$.

Furthermore, if the $\rqfo$ sentence is existential with inequalities 
(resp. existential) the result is a $\uspjneg$-plan (resp. $\euspjneg$-plan).
If the sentence is positive existential  with inequalities (resp. positive existential)  the result is a $\uspj$-plan (resp. $\euspj$-plan).

In the other direction, we can convert every Boolean plan (of the given form)
into the corresponding logic.
\end{proposition}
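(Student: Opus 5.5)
The plan is to translate formulas into plans by structural induction on the executable sentence. Since subformulas of a sentence have free variables, the induction must be generalized. For an executable formula $\phi(\vec z)$, we say $\phi$ is executable relative to $\vec z$ when every input position of each quantified atom is occupied by a constant or by a variable bound earlier or free in $\phi$. We then build, for each such $\phi(\vec z)$ and a designated free temporary table $T_{in}$ with attributes $\vec z$, a plan $\plan_\phi$ whose final table $T_\phi$ is the set of tuples of $T_{in}$ satisfying $\phi$ in $I$. We call this a \emph{filter plan}. A sentence is the case where $\vec z$ is empty and $T_{in}$ is the one-tuple nullary table, and the Boolean output is $\return~T_\phi$.

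\textbf{Base cases and connectives.}
\begin{itemize}
\item For $\true$, the plan is $T_\phi := T_{in}$.
\item For an equality $z_i=z_j$ or $z_i = c$, the plan is a selection on $T_{in}$; an inequality gives a selection with $\neq$.
\item For $\phi_1\wedge\phi_2$, we compose sequentially: run $\plan_{\phi_1}$ on $T_{in}$, then run $\plan_{\phi_2}$ on its output.
\item For $\phi_1\vee\phi_2$, we run both plans on $T_{in}$ and take the union.
\item For $\neg\phi_1$, we take $T_{in} - T_{\phi_1}$.
\end{itemize}

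\textbf{Existential quantifier.} For $\exists \vec y~R(\vec x,\vec y)\wedge\psi(\vec x,\vec y,\vec z)$, executability gives a method $\mt$ on $R$ whose inputs lie among the positions holding $\vec x$. We issue the access command $T' \Leftarrow \mt \Leftarrow \pi_{\text{inputs}}(T_{in})$, where constants are supplied by selection or by a product with a constant table. We then join $T'$ with $T_{in}$ on the shared variables, and add selections to enforce repeated variables and constants in the atom $R(\vec x,\vec y)$. Next we run $\plan_\psi$ with this join as its input table, and project the result back onto $\vec z$.

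\textbf{Universal quantifier.} We rewrite $\forall \vec y~R\rightarrow\phi$ as $\neg\exists\vec y~R\wedge\neg\phi$ and reuse the previous cases.

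Each construct adds a constant number of commands, so the procedure runs in linear time. We use the temporary-table renaming convention so that each table is assigned once.

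\textbf{Fragment claims.} If $\phi$ is positive existential, no difference operator is ever produced, so the plan is an $\uspj$-plan, or an $\euspj$-plan when there are no inequalities. If $\phi$ is existential, negation appears only on atoms, which in $\rqfo$ means subformulas $R(\vec x)\rightarrow\false$ with all positions occupied. Executability then yields a method whose inputs are among those positions. Negation of such an atom over the current table $T$ is exactly the non-membership check
\[ T' \Leftarrow \mt \Leftarrow \pi(T), \qquad T'' := T-(T\join T'), \]
so we obtain an $\uspjneg$-plan. Here some care is needed to arrange attribute names so that the join matches the prescribed syntactic form.

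\textbf{Converse direction.} We go by induction on the commands of the plan. For each temporary table $T$ we build an executable formula $\phi_T(\vec a)$ defining its contents. For a middleware command we use the standard relational algebra to $\rqfo$ translation of Proposition~\ref{prop:relalgandrqfo}, relativized to the previously defined $\phi_{T_i}$. For an access command with input expression $E$ we take
\[ \exists \vec w~\phi_E(\vec w)\wedge R(\ldots), \]
where the input positions are filled by $\vec w$. This formula is not yet executable, since $\vec w$ is quantified before $R$; we must reorder it so that the guard comes from the $\phi_E$ atoms and $R$ is accessed with bound inputs.

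\textbf{Main obstacle.} The hardest step is this converse. Relational algebra expressions over temporary tables translate to formulas with quantifiers that are not guarded by schema relations, so they are not immediately executable. I would first try to maintain the invariant that every $\phi_T$ is executable relative to its free variables, and to show that each RA operation preserves this invariant. Projection becomes an existential quantifier over a guard that is itself executable; I would handle it by inlining the definitions of the temporary tables as guards, effectively treating temporary tables as relations with an input-free method and then substituting their definitions. The fragment preservation in this direction is easy: a non-membership check becomes a negated guarded atom, and unions and joins become $\vee$, $\wedge$, and $\exists$.
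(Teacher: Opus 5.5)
Your forward direction (executable sentence to plan) is sound, and it is the standard induction the paper defers to the book. You generalize to formulas with free variables and build a plan that filters an input table. You handle $\exists\vec y\,R(\vec x,\vec y)\wedge\psi$ by an access on a projection of the current table followed by a join. In the existential fragment you realize negated atoms by non-membership checks. Up to routine attribute bookkeeping, this half is fine.

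The gap is in the converse. The invariant you propose first, that each $\phi_T$ be executable relative to its free variables, does not help. It lets the attributes of $T$ occupy input positions, whereas in the plan those values are \emph{produced}. Take $S$ unary with an input-free method, and $R$ binary with a method whose only input is position $1$. The plan $T_1 \Leftarrow \mt_S \Leftarrow \emptyset$, $T_2 \Leftarrow \mt_R \Leftarrow T_1$, $\return~\pi_\emptyset(T_2)$ gives $\phi_{T_2}(u,v)=R(u,v)\wedge S(u)$. This meets your invariant, but closing it off as $\exists u\,\exists v\,(R(u,v)\wedge S(u))$ is not executable; you must produce $\exists u\,(S(u)\wedge\exists v\,R(u,v))$ instead. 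Likewise, ``substituting the definition'' of $T$ into a guard $\exists\vec x\,T(\vec s,\vec x)\wedge\phi$ does not by itself yield an $\rqfo$ formula, because that definition is not an atom.

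Your inlining intuition is right, but it has to be made into a guarded-quantification lemma, proved by induction along the plan and on RA expressions:
\begin{itemize}
\item Let $E$ be an expression over already-defined tables, with its attributes split into bound ones $\vec u_1$ and quantified ones $\vec u_2$.
\item Let $\psi$ be executable relative to $\vec u_1\vec u_2\vec z$.
\item Then some formula executable relative to $\vec u_1\vec z$ is equivalent to $\exists\vec u_2\,(E(\vec u_1,\vec u_2)\wedge\psi)$.
\end{itemize}
The key case is an access table $T\Leftarrow\mt_R\Leftarrow E'$. Split $\vec u_2$ into the attributes $\vec u_2^{\mathrm{in}}$ at input positions of $\mt_R$ and the rest $\vec u_2^{\mathrm{out}}$. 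Apply the lemma for $E'$ to quantify $\vec u_2^{\mathrm{in}}$, with continuation $\exists\vec u_2^{\mathrm{out}}\,R(\ldots)\wedge\psi$; in that continuation every input of $\mt_R$ is already bound.

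The remaining cases are routine:
\begin{itemize}
\item Union becomes a disjunction with $\psi$ duplicated, so this direction is not linear.
\item For $E_1-E_2$, use the lemma for $E_1$ with continuation $\neg M_{E_2}\wedge\psi$, where $M_{E_2}$ is the instance with $\vec u_2$ empty and $\psi=\true$.
\item Projection also quantifies the dropped attributes.
\item Product nests the two instances.
\item Selection conjoins an equality to $\psi$.
\end{itemize}
The Boolean sentence is the instance for the returned expression with all attributes quantified and $\psi=\true$.

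For the fragment claims, the generic difference rule would negate the whole membership formula for $T\join T'$, which leaves the existential fragment. You need the observation that, inside the guard for $T$, membership in $T\join T'$ reduces to the single $R$-atom over bound variables. That observation is what justifies your assertion that a non-membership check becomes a negated atom.
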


This result is not difficult, and the inductive arguments are given in \cite{interpbook}.
It will allow us to deal with logical formulas having these ``access-restricted quantifiers'' from now on.

\myparagraph{The Accessible Part}
In order to begin our instantiation of the meta-algorithm for plans, we need a semantic property
corresponding to existence of a plan. The property should say that  formula   ``only depends on the  accessible data'', and thus we need to formalize what accessible data means.

\begin{definition}\label{def:accpart}
Given an instance $\aninst$ for schema $\aschema$ the \emph{accessible part of $\aninst$}, denoted $\accpart(\aninst)$, and the \emph{accessible values of $\aninst$},
denoted  $\accessible(\aninst)$. Informally the accessible part consists of all the facts over $\aninst$ that can be obtained by starting with empty relations and iteratively entering values into the access methods.
If $\aschema$ contains no schema constants, this will be 
 an instance containing a set of facts $\acc{R}(v_1 \ldots v_n)$,
where $R$ is a relation and  $v_1 \ldots v_n$ are values in the domain of $\aninst$ such that $R(v_1 \ldots v_n)$ holds in $\aninst$,
obtained by starting with relations $\acc{R}_0$  and $\accessible_0$
empty
and then iterating the following process until a fixpoint is reached:

\begin{align*}
    \accessible_{i+1} &= \accessible_{i} ~ \cup \mathop{\bigcup_{\text{$R$ a relation}}}_{j \leq \arity(R)} \pi_j (\acc{R}_i)
\end{align*}
and
\begin{align*}
	\acc{R}_{i+1} &= \acc{R}_i ~ \cup \hspace{-65pt}
		\mathop{\bigcup_{(R,\{j_1, \ldots, j_m\})}}_{\text{there is a method of $\aschema$ on $R$ with inputs $j_1, \ldots, j_m$}} \hspace{-75pt}
	\{(v_1 \ldots v_n) \in \interpret{R}{I}~|~v_{j_1} \ldots v_{j_m} \in \accessible_i \}
\end{align*}

%

Above $\pi_j(\acc{R}_i)$ denotes projection of $\acc{R}_i$ on the $j^{th}$ position.
For a finite instance, this induction  will reach a fixpoint after $|\aninst|$ iterations, 
where $|\aninst|$ denotes the number of facts in $\aninst$.
For an arbitrary instance the union of these instances over all $i$ will be a fixpoint.
Assuming $\aschema$ does include schema constants, we modify the definition
by starting with an $\accessible_0$ consisting of the schema constants, rather than being
empty.
\end{definition}
Above we consider $\accpart(I)$ as a database instance for the schema
with relations $\accessible$ and $\acc{R}$. The accessible
values are the union of the $\accessible_i$.

In the case of vocabulary-based access-restrictions, the accessible part of an instance
just represents the restriction of  the instance to the relations in the subsignature $\T$.
Thus access determinacy of a query $Q$ in the case of vocabulary-based restrictions
is the same as determinacy of $Q$ with respect to the subsignature.

\myparagraph{The Semantic Property and Entailment for RA-Plans}
We now give the analogous property and entailment for RA-plans.
$Q$ is said to be \emph{access-determined} over $\aschema$ if for all instances $\aninst$ and $\aninst'$ satisfying the constraints
of $\aschema$ with $\accpart(I)=\accpart(I')$ we have $\outcomeb{Q}{I}=\outcomeb{Q}{I'}$.
If a query is \emph{not} access-determined, it is obvious that it cannot be answered through any plan,
since it is easy to see that any plan can only read tuples in the
accessible part.

\begin{myexmp} \label{ex:accessd}
We return to the setting of Example~\ref{exone}, where  we have a $\profinfo$ table available via a web form, containing information about faculty,
including their last names, office number, and employee id, but with only an access method $\mt_{\profinfo}$ that
requires giving an employee id as an input. 
We consider a query $Q$ asking for ids of faculty named $\smith$, where
$\smith$ is a schema constant.

We show that $Q$ is not access-determined.
For this, take  $I$ to be any instance that contains  exactly one tuple,
 with $\lastname$ $\smith$, but with no schema constant as its employee id.
Let $I'$  be the empty instance.
The accessible parts of $I$ and $I'$ are empty, since in both cases when we enter all the constants we know about in $\mt_\profinfo$,
we get the empty response.
But $Q$ has an  output on $I$ but no output on $I'$.

$I$ and $I'$ witness that $Q$ is not access-determined. From this we see that $Q$ can not be implemented by any plan using 
$\mt_\profinfo$.
\end{myexmp}

We now turn to the corresponding entailment for access-determinacy. The accessible part itself is not a first-order definable object --- it requires an infinite disjunction or recursion. But it will turn out that ``agreeing on the accessible part'' can be expressed in a first-order way.

We start with a schema $\aschema$ consisting of relation symbols and RQFO sentences --- integrity constraints, as well as access restrictions. The \emph{bidirectional accessible schema for $\aschema$}, 
denoted $\accsb(\aschema)$, is a vocabulary without access restrictions, and also a set of integrity constraints in the form of RQFO sentences. It is defined as follows:
\begin{itemize}
\item The constants are those of $\aschema$.
\item The relations are those of $\aschema$, a unary relation $\accessible(x)$ (``$x$ is an accessible value'') 
	plus a copy of each relation $R$ of $\aschema$ called $\infacc{R}$ (the ``inferred accessible version of $R$''). 
\item The constraints are those of $\aschema$ (referred to as ``$\aschema$ constraints'' below) 
    for each access method $\mt$ on relation $R$ of arity $n$ with input positions $j_1 \ldots j_m$
	along with  the following constraints (dropping universal quantifiers on the outside for brevity)
\begin{itemize}

	\item \emph{forward accessibility axioms}:  we have a rule:
		\begin{eqnarray*}
		\accessible(x_{j_{1}}) \wedge \ldots \wedge \accessible(x_{j_{m}}) \wedge R(x_1 \ldots x_n)
		\rightarrow\\ 
		\infacc{R}(x_1 \ldots x_n) \wedge \bigwedge_j \accessible(x_j)
		\end{eqnarray*}
        \item \emph{backward accessibility axioms}
        \begin{eqnarray*}
\bigwedge_{i \leq m} \accessible(x_{j_{i}}) \wedge  \infacc{R}(x_1 \ldots x_n) 
\rightarrow 
R(x_1 \ldots x_n) \wedge \bigwedge_{i} \accessible(x_i)
\end{eqnarray*}
		In addition, we have $\accessible(c)$ for each constant $c$ of $\aschema$.

	\item A copy of each of the original integrity constraints, 
		with each relation $R$ replaced by $\infacc{R}$, denoted ``$\infacccon$ constraints'' below.
\end{itemize}
\end{itemize}

\medskip\par\noindent

Given a query $Q$, its \emph{inferred accessible version} $\accq{Q}$ is obtained by replacing each relation $R$ by $\infacc{R}$.
Informally, $\accq{Q}$ represents the fact that the existence of a witness to $Q$ can be obtained through making
accesses and reasoning.

We  overload $\accs(\aschema)$ to also refer to the conjunction of axioms in this schema $\accs(\aschema)$.

A way of motivating the axioms is to think informally of the relations $R$ $\infacc{R}$ 
as representing the instances considered in access determinacy.
By having two copies of the constraints, we are stating that both instances satisfy the constraints of the schema.
The forward and backward accessibility axioms will ensure that
these two instances have the same accessible part.
The entailment thus captures that if two instances have the same accessible part, and both satisfy the constrains, and one satisfies
query $Q$, then so does the other.

Again, we show that this entailment captures the proposed preservation property, access-determinacy.
\begin{claim} \label{clm:raanddeterminacy}
The following are equivalent (for any Boolean $\rqfo$ query $Q$ and
access schema consisting of $\rqfo$ constraints):
\begin{enumerate}
\item $Q$ entails  $\accq{Q}$ with respect to the rules in  $\accsb(\aschema)$
\item $Q$ is access-determined over $\aschema$
\end{enumerate}
\end{claim}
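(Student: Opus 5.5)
We prove the two implications separately. Each direction amounts to translating between a model of $\accsb(\aschema)$ and a pair of instances $I, I'$ of $\aschema$ that have the same accessible part. Throughout we read the $R$-relations as $I$ and the $\infacc{R}$-relations as $I'$. We work over arbitrary (not necessarily finite) instances, so the accessible part is the union of the stages from Definition~\ref{def:accpart}.

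\emph{(2) $\Rightarrow$ (1).} Assume $Q$ is access-determined. Let $M$ be a structure satisfying all axioms of $\accsb(\aschema)$ together with $Q$, and let $I$ and $I'$ be the restrictions of $M$ to the $R$-relations and to the $\infacc{R}$-relations (renamed to $R$), respectively. Both satisfy the constraints of $\aschema$, since $M$ satisfies the $\aschema$ constraints and the $\infacacon$ constraints.

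We show $\accpart(I)=\accpart(I')$ by induction on the stages $i$ of the fixpoint. The invariant is that $\accessible_i(I)=\accessible_i(I')$, that this set is contained in $\accessible^M$, and that $\acc{R}_i(I)=\acc{R}_i(I')$, with every such fact lying in $R^M \cap \infacc{R}^M$.
\begin{itemize}
\item Base case: $\accessible_0$ is the set of schema constants in both instances, and $M \models \accessible(c)$ for each constant $c$.
\item Inductive step: take a fact $R(\vec v)$ of $I$ produced at stage $i+1$ through a method with inputs $j_1 \ldots j_m$, so each $v_{j_k}\in\accessible_i \subseteq \accessible^M$. The forward accessibility axiom gives $\infacc{R}(\vec v)$, so the fact is also in $I'$ and is produced there at the same stage, and it also gives $\accessible(v_j)$ for all $j$. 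The symmetric argument, starting from a fact of $I'$, uses the backward accessibility axiom.
\end{itemize}
Access-determinacy now gives $I \models Q \Rightarrow I' \models Q$, that is, $M \models \accq{Q}$. This uses that $Q$ mentions only $\aschema$ relations, so its truth in $M$ equals its truth in $I$; relativized quantifiers make this immediate.

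\emph{(1) $\Rightarrow$ (2).} Assume the entailment holds. Take $I, I'$ satisfying the constraints with $\accpart(I)=\accpart(I')$ and $I\models Q$. Build $M$ on the union of the two domains:
\begin{itemize}
\item $R^M = R^I$ and $\infacc{R}^M = R^{I'}$;
\item $\accessible^M = \accessible(I)$, the set of accessible values, which is the same for $I$ and $I'$.
\end{itemize}
We check the axioms of $\accsb(\aschema)$.
\begin{itemize}
\item Both copies of the constraints hold, because RQFO truth depends only on the relations mentioned and not on the ambient domain.
\item Forward axiom: suppose the inputs $x_{j_1}\ldots x_{j_m}$ of $R(\vec x)$ are accessible values. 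Since the accessible values are the union of the stages, they all lie in some $\accessible_i$, so $R(\vec x)\in\acc{R}_{i+1}(I)=\acc{R}_{i+1}(I')$. Hence $R(\vec x)$ holds in $I'$, which gives $\infacc{R}(\vec x)$, and every $x_j$ is accessible.
\item Backward axiom: the symmetric argument with the roles of $I$ and $I'$ exchanged.
\item $\accessible(c)$ holds for each constant $c$ by the base case of the fixpoint.
\end{itemize}
The entailment then gives $M\models\accq{Q}$, so $I'\models Q$. Exchanging the roles of $I$ and $I'$ yields equality of the outputs; since $Q$ is Boolean, this is exactly what access-determinacy requires.

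The only delicate point is the stagewise induction, specifically checking that the constructed $\accessible^M$ meets the axioms exactly. A naive choice such as the whole domain would violate the forward axioms; taking $\accessible^M$ to be precisely the common accessible values avoids this. We also need that agreeing on $\accpart$ really forces equal accessible-value sets, which holds because those values are the projections of the accessible facts together with the constants. If constants of $\aschema$ may denote different elements in $I$ and $I'$, we first identify them, which is harmless because equal accessible parts already force the constants to be interpreted identically where relevant.
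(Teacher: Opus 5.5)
Your proof is correct and follows the paper's argument. For (1)$\Rightarrow$(2) you build the same combined structure with $\accessible$ interpreted as the common accessible values, and for (2)$\Rightarrow$(1) you use the same stagewise induction via the forward and backward accessibility axioms, phrased directly rather than by contrapositive. One correction to your closing aside about constants: equal accessible parts do \emph{not} force constants to be interpreted identically (swapping the values of two constants $c,d$ can leave $\accpart$ unchanged while changing the truth of $R(c)$), so the claim genuinely relies on the paper's implicit convention that schema constants denote fixed values, which you should state as an assumption rather than try to derive.
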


\begin{proof}
We prove that the first item implies the second.
Fix $\aninst$ and $\aninst'$ satisfying the schema with the same accessible part, and assume $\aninst$ satisfies $Q$.
Consider the instance $\aninst''$ for $\accsb(\aschema)$ formed by interpreting the relations $R$ as in $\aninst$,
the relation $\accessible$ by the accessible values of $\aninst$, and each $\infacc{R}$ by the interpretation of $R$ in $\aninst'$.
Then one can  verify that $\aninst''$ satisfies the constraints of $\accsb(\aschema)$.
Since $\aninst$ (and hence $\aninst''$) satisfies $Q$, and we are assuming that $Q$ entails $\accq{Q}$
with respect to $\accsb(\aschema)$ we can conclude that $I''$ must satisfy $\accq{Q}$.
So $Q$ holds in $\aninst'$ as required.

We complete the proof of the claim by arguing from the second item to the first.
Suppose $Q$ is not contained in $\accq{Q}$ with respect to the rules in $\accsb(\aschema)$.
Hence there is an instance $I^{\accsb}$ satisfying the rules of $\accsb(\aschema)$ and also satisfying $Q \wedge \neg \accq{Q}$.
Let $I_1$ consist of the restriction of $I^{\accsb}$ to the original schema relations.
Let $I_2$ consist of the inferred accessible relations from $I^{\accsb}$, renamed to the original schema.

We claim that a fact $R(e_1 \ldots e_n)$ in the  accessible part of $I_1$ is in the accessible part of $I_2$.
This is proven by induction on the point in which the fact goes into the accessible part, using the forward accessibility axioms.
Arguing symmetrically (now using the backward axioms), we find that $I_1$ and $I_2$ have the same accessible part, and hence they contradict access-determinacy.
\end{proof}

We are now ready to state our main results on the relationship between semantic properties, entailments, and plans.

\begin{theorem} \label{thm:racomplete}
For any Boolean $\rqfo$ query $Q$ and access schema $\aschema$ 
with constraints specified in $\rqfo$,
there is an RA-plan answering $Q$ (over instances satisfying $\aschema$) if and only if $Q \wedge \accsb(\aschema) \models \accq{Q}$.
If the query and constraints do not include equality, 
then the RA-plan will not make use of equality in any of its RA expressions.

Further, from any tableau proof witnessing $Q \wedge \accsb(\aschema) \models \accq{Q}$ we can extract (in linear time) an RA-plan for $Q$ over $\aschema$. 

\end{theorem}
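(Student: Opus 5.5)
The plan is to follow the meta-algorithm again, but the interpolation step needs a variant adapted to accesses. The equivalence is the easy direction. If an RA-plan answers $Q$, then $Q$ is access-determined, because a plan only reads tuples in the accessible part. Hence by Claim~\ref{clm:raanddeterminacy} we get $Q \wedge \accsb(\aschema) \models \accq{Q}$. For the converse, I will show that from any (tableau) proof of this entailment we can extract an executable $\rqfo$ sentence $\chi$ that is equivalent to $Q$ under the constraints of $\aschema$. Proposition~\ref{prop:execbool} then turns $\chi$ into an RA-plan in linear time, and that plan uses equality only if $\chi$ does.

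First, split the entailment so that interpolation applies. The left side $\lambda$ is $Q$ together with the $\aschema$ constraints and the forward accessibility axioms. The right side $\rho$ is the implication from the $\infacccon$ constraints and the backward accessibility axioms to $\accq{Q}$. After the axiom $\accessible(c)$ is placed on the left, the shared vocabulary is just $\accessible$ and the constants. A plain Craig interpolant, obtained as in Theorem~\ref{thm:effadomcraig}, would therefore be a sentence over $\accessible$ alone. That is too weak: it records nothing about which $R$-facts were accessed.

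So the key step, and the expected main obstacle, is an ``access interpolation'' theorem. It should say: given a tableau proof of $\lambda \models \rho$, where the two sides share only the $\accessible$ predicate (plus constants) and the accessibility axioms relate $R$ on the left to $\infacc{R}$ on the right, we can build in linear time an interpolant $\chi$ in the original relations $R$ whose quantifiers have the form $\exists \vec y\, R(\vec x,\vec y)\wedge\phi$ or $\forall \vec y\, R(\vec x,\vec y)\rightarrow\phi$, with the input positions of some method on $R$ filled by variables already known to be accessible. In other words, $\chi$ is executable. It must also satisfy $\lambda \models \chi$ and $\chi \wedge \infacccon \wedge(\text{backward axioms}) \models \accq{Q}$, where the second entailment is read through the correspondence $R \leftrightarrow \infacc{R}$ on the accessible part.

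I would prove this by modifying the standard tableau interpolation algorithm from \refchapter{chapter:firstorder}. The modified rules are the ones that apply a forward or backward accessibility axiom. Each such application moves a fact $R(\vec t)$ whose input-position values are already accessible across to the other side. The algorithm records it in the interpolant as a relativized quantifier over $R$ that binds the output positions of the corresponding method. Every other rule is handled as in the Craig case. The invariant to maintain is that every free term of a partial interpolant is provably $\accessible$ at that node, and this is exactly what makes the quantifiers executable. Equality-freeness is inherited as in Theorem~\ref{thm:adomcraig}.

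Finally, check correctness as in Proposition~\ref{prop:interpgivesreform}. Since $\lambda \models \chi$, the constraints and $Q$ entail $\chi$. For the converse direction, take an instance $I$ of $\aschema$ that satisfies $\chi$. Interpret the $\infacc{R}$ copy as $I$ itself and $\accessible$ as the accessible values of $I$. Then the right side gives $\accq{Q}$, that is, $Q$ holds in $I$. Hence $\chi$ is equivalent to $Q$ under $\aschema$, and the plan $\plan_\chi$ from Proposition~\ref{prop:execbool} answers $Q$.
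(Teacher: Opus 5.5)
\textbf{There is a gap: your split cannot yield an executable interpolant.} Your plan-to-entailment direction (via access-determinacy and Claim~\ref{clm:raanddeterminacy}) matches the paper. So does your final correctness check, which interprets $\infacc{R}$ as $R$.

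\emph{False premise.} The two sides do not share only $\accessible$. Every forward and every backward accessibility axiom mentions both $R$ and $\infacc{R}$. So wherever you place them, $R$, $\infacc{R}$ and $\accessible$ are common to both sides for every relation with an access method. Vocabulary separation gives you nothing here. The only leverage is polarity together with binding patterns, which is exactly what Theorem~\ref{thm:access-interpolation} tracks. That theorem is already available, so no bespoke tableau modification is needed.

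\emph{Wrong-way split.} Your split is the mirror image of the one that works. With the forward axioms in $\lambda$ and the backward axioms as hypotheses in $\rho$, $R$ occurs in $\rho$ only negatively, and $\infacc{R}$ occurs in $\lambda$ only positively. Hence every binding pattern on $R$ in $\chi$ is universal, and need only be covered by a universal pattern of $\lambda$. But $\lambda$ contains $Q$ and $\Sigma$. Likewise every pattern on $\infacc{R}$ is existential, and need only be covered by an existential pattern of $\rho$, which contains $\infacccon$ and $\accq{Q}$. Those patterns are arbitrary, so nothing forces executability.

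Your sketched tableau modification does not repair this. The rules ``handled as in the Craig case'' for $Q$ and $\Sigma$ still introduce relativized quantifiers with their own non-access patterns. The invariant that free terms are provably $\accessible$ is not shown to be maintainable. Even if it were, it would not make those quantifiers executable.

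\textbf{What the paper does instead.} It uses the opposite split, $Q\wedge\Sigma_1 \models \Sigma_2 \rightarrow \accq{Q}$, with the backward axioms in $\Sigma_1$ and the forward axioms plus $\infacccon$ in $\Sigma_2$.
\begin{itemize}
\item $R$ then occurs on the right only positively, inside negated forward axioms. So in $\chi$ it appears only in existential patterns covered by access methods.
\item $\infacc{R}$ occurs on the left only negatively, in backward axioms. So in $\chi$ it appears only in universal patterns covered by access methods.
\end{itemize}
The interpolant genuinely needs both copies: $R$ for the positive accesses and $\infacc{R}$ for the negated ones. $\deacc(\chi)$ then merges them. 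Aiming for $\chi$ ``in the original relations $R$'' misses this mechanism.

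\textbf{Missing elimination of $\accessible$.} The paper first passes to $\accsbp$ (Proposition~\ref{prop:axequiv}). There each hypothesis $\accessible(x)$ is replaced by an atom $\infacc{S}(\ldots x\ldots)$ or $S(\ldots x\ldots)$. Without this step, $\accessible$ is shared and occurs in both polarities on both sides, quantified with empty binding patterns. So $\chi$ may contain $\exists x\,\accessible(x)\wedge\ldots$ or $\forall x\,\accessible(x)\rightarrow\ldots$. No plan can execute these: $\accessible$ is not a schema relation with an access method, and the accessible values are not even first-order definable.
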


\noindent Using Claim~\ref{clm:raanddeterminacy}, we can restate Theorem~\ref{thm:racomplete}:

\begin{quote}
For any Boolean $\rqfo$ query $Q$ and access schema $\aschema$ with constraints specified in $\rqfo$,
there is an  RA-plan answering $Q$ (over instances of $\aschema$) 
if and only if $Q$ entails $\accq{Q}$ with respect to the rules in $\accsb(\aschema)$
if and only if $Q$ is access-determined.
\end{quote}

\noindent In the direction from right to left we are again going from a preservation property
to a syntactic restriction. Thus Theorem~\ref{thm:racomplete} can be considered as an ``access-restricted
variant'' of the definability theorems given in the vocabulary-based case.

\subsection{The Interpolation Result Behind the Rewriting Results: Access Interpolation}
We have instantiated the first steps of our meta-algorithm for reformulation via relational algebra access plans. 
For RA-plans, 
we have stated a semantic property of a query that is required for it to have an equivalent target RA plan.
And we have shown that the semantic property is equivalent to an entailment.
The next ingredient in proving  the main results relating plans to  proofs of these entailments
is an interpolation theorem that tracks the ``access methods used in the interpolant''.

\myparagraph{Interpolation and Access Methods}
Recall Proposition~\ref{prop:execbool}. 
It states that Boolean RA-plans have the same expressiveness as $\rqfo$
sentences that are executable for membership checks.
For simplicity, we call these ``executable FO Boolean queries'' below.

By this corollary, we can prove our main results by finding reformulations that are executable FO Boolean queries, rather than RA-plans.
To find these reformulations, we require a version of Craig interpolation that allows us to relate the ``binding patterns'' ---
\ie ~ the subset of variables that are quantified in each relativized-quantifier quantification ---  
used in the interpolant $\chi$ of the entailment $\lambda\models \rho$ with those used in $\lambda$ or $\rho$.
When we apply this theorem to the entailment of $\accq{Q}$ by $Q$, we can
conclude that the interpolant is an executable FO Boolean query.

We associate to $\rqfo$ formulas the set of binding patterns used in quantification, 
where a binding pattern is a relation and a subset of the positions.
This is done by structural induction on the formula, assuming that the formula is first put into negation normal form.

\[\begin{array}{@{}lll@{}}
\bindpat(\top) ~=~ \bindpat(x=y) &=& \emptyset \\
\bindpat(R(t_1, \ldots, t_n)) &=& \{ (R,\{1,\ldots, n\}) \} \\
\bindpat(\neg\phi) &=& \bindpat(\phi) \\
\bindpat(\phi\land\psi) &=& \bindpat(\phi)\cup \bindpat(\psi)  \\
\bindpat(\phi\lor\psi) &=& \bindpat(\phi)\cup \bindpat(\psi) \\
\bindpat(\exists\vec{x} ~ (R(t_1, \ldots, t_n)\land \phi)) &=& \bindpat(\phi)\cup\{(R,\{i\mid t_i\not\in\vec{x}\})\} \\
\bindpat(\forall\vec{x} ~ (R(t_1, \ldots, t_n)\to \phi)) &=& \bindpat(\phi)\cup\{(R,\{i \mid t_i\not\in\vec{x}\})\}
\end{array}\]
For example,
\[
\bindpat(\exists x ~\exists y ~ (R(x,y)\land \forall z ~ (S(x,y,z)\to U(x,y,z)))) = 
\{(R,\emptyset), (S,\{1,2\}), (U,\{1,2,3\})\}
\]
We can similarly talk about the \emph{universal binding patterns} of a formula  --- those
that arise from a subformula $\forall\vec{x} ~ (R(t_1, \ldots, t_n)\to \phi)$ --- and the \emph{existential binding
patterns} --- those that arise from a subformula $\exists\vec{x} ~ (R(t_1, \ldots, t_n)\land \phi)$. These can be defined
formally by modifying the inductive definition above. Note that every binding pattern of a formula is either a universal
pattern or an existential pattern, since a non-quantified relation can
be considered as a vacuous case of existential quantification
(if occurring positively) or universal quantification (if occurring
negatively).

Intuitively, $\bindpat(\phi)$ describes the kind of access that is used if $\phi$ is
evaluated in an instance using a straightforward inductive evaluation procedure.
For a sentence $\phi$, if  for each pattern $(R  \{s_1 \ldots s_j\})$ in $\bindpat(\phi)$  $\aschema$  contains an access method
on $R$ whose input positions are contained in  $\{s_1 \ldots s_j\}$, then $\phi$ is an executable FO Boolean query.
We say that a binding pattern $(R_1, \{s_1 \ldots s_j\})$  is \emph{covered by} another binding pattern $(R_2, \{t_1 \ldots t_k\})$
if $R_1=R_2$ and $\{s_1 \ldots s_j\}$ is a superset of $\{t_1 \ldots t_k\}$.   To show that a sentence $\phi_1$
is an executable FO Boolean query, it suffices to get another executable FO Boolean query $\phi_2$ whose binding patterns
cover those of $\phi_2$.

Recall that a relation  $R$ \emph{occurs positively}  (\emph{negatively}) in a formula $\phi$ if some occurrence
of $R$ in $\phi$ is in the scope of an even (odd) number of negations. For the purpose of this definition,
we view the implication symbol as a shorthand: $\psi\to\chi$ stands for $\neg\psi\lor\chi$. For example, in the formula $\forall x ~(P(x)\to \exists y ~ R(x,y))$, the relation 
$P$ occurs negatively and the relation  $R$ occurs positively.

\begin{theorem}[Access interpolation] \label{thm:access-interpolation}
Let $\lambda$ and $\rho$ be $\rqfo$ sentences such that $\lambda \models \rho$.
Then there exists an $\rqfo$ sentence $\chi$ such that
\begin{enumerate}
\item $\lambda \models \chi$ and $\chi \models \rho$.
\item A relation  occurs positively (negatively) in $\chi$ only if it occurs
    positively (negatively) in both $\lambda$ and $\rho$.
\item A constant  occurs in $\chi$ only if it occurs both in $\lambda$ and $\rho$.
\item Every existential binding pattern of $\chi$ is covered by an existential binding pattern
	of $\rho$, and the relation it binds occurs positively in $\lambda$.
	Every universal binding pattern of $\chi$ is covered by a universal binding pattern of $\lambda$, 
	and the relation it binds occurs negatively in $\rho$.


\item If $\lambda$ and $\rho$ are both equality-free, then $\chi$ is equality-free.
\end{enumerate}
Furthermore, $\chi$ can be computed in polynomial time from a proof (in a suitable proof system) of the entailment $\lambda \models \rho$.
\end{theorem}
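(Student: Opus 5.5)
We will prove the theorem by the standard proof-theoretic route to interpolation: fix a suitable cut-free proof system for $\rqfo$, namely a tableau calculus (as in \refchapter{chapter:firstorder}) whose quantifier rules act on relativized quantifiers. We then extract $\chi$ by induction on a closed tableau for $\lambda \wedge \neg\rho$, tracking enough syntactic information to verify items 2--5. Concretely, we work with \emph{biased} tableaux: every formula on a branch carries a label $L$ (originating from $\lambda$) or $R$ (originating from $\neg\rho$). Put $\lambda$ and $\neg\rho$ in NNF, so that the $\rqfo$ quantifiers appearing are exactly the binding patterns of $\lambda$ and of $\neg\rho$. A universal pattern of $\neg\rho$ is an existential pattern of $\rho$, and vice versa. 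For each closed subtableau with root sets $\Gamma_L,\Gamma_R$ we produce $\chi$ with $\Gamma_L\models\chi$ and $\Gamma_R\models\neg\chi$, and we maintain invariants 2--5 with respect to $\Gamma_L$ and $\neg\Gamma_R$.

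The induction follows Maehara's method. At a closure, the conflicting pair is either $R(\vec t)$ and $\neg R(\vec t)$, or an equality clash. If both members lie on one side, $\chi$ is $\false$ or $\true$. If $R(\vec t)$ has label $L$ and $\neg R(\vec t)$ has label $R$, take $\chi = R(\vec t)$. This is the vacuous existential pattern $(R,\{1..n\})$; it is covered because $\neg R(\vec t)$ arose from an $R$-occurrence in $\rho$ by decomposition, and $R$ occurs positively in $\lambda$. The symmetric case gives $\neg R(\vec t)$, which is handled as a vacuous universal pattern. Propositional rules combine interpolants with $\vee$ (branching on $L$) or $\wedge$ (branching on $R$), and they preserve all invariants.

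The essential case is the relativized quantifier rules. Consider a $\delta$-type rule on $\exists\vec y\,(S(\vec s,\vec y)\wedge\phi)$. It introduces fresh witnesses $\vec c$ and adds $S(\vec s,\vec c)$ and $\phi(\vec c)$. Consider also a $\gamma$-type rule on $\forall\vec y\,(S(\vec s,\vec y)\to\phi)$, which may only be applied to terms $\vec u$ for which $S(\vec s,\vec u)$ is already on the branch. Because $\gamma$ is guarded this way, every element introduced is witnessed by a guard atom. When the inductive interpolant $\chi'$ contains a parameter $c$ that is local to one side, we must quantify $c$ away. We always do so using the guard atom that introduced it. Suppose $c$ came from an $R$-labelled $\delta$-step on $\exists\vec y\,(S(\vec s,\vec y)\wedge\dots)$, where this formula sits in $\neg\rho$, so that in $\rho$ it is the universal pattern $\forall\vec y\,(S\to\dots)$. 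Then we form $\exists\vec y\,(S(\vec s,\vec y)\wedge\chi'[\vec y/\vec c])$ when $\vec c$ is $L$-local. Dually, we form a universal $\forall \vec y\,(S(\vec s,\vec y)\to\chi')$ guarded by an $L$-side atom. The point to verify is that the pattern $(S,\{i: s_i\notin\vec y\})$ we create is covered by a pattern of the correct side and kind, and that $S$ has the required polarity on the other side. The guard was copied into the interpolant only because it also had to be matched against an atom coming from the opposite side. To guarantee this, we prefer to restructure the construction so that quantification in $\chi$ happens exactly at closure-relevant guards. Constants and equality (items 3 and 5) are handled as in Otto's relativized Craig interpolation (Theorem~\ref{thm:adomcraig}/\ref{thm:effadomcraig}): no new constants or equalities are introduced except those already on both sides. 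The extraction is a single pass over the proof, hence polynomial.

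The main obstacle is item 4. The difficulty is the \emph{covering} direction: we must show that every quantifier we insert has its bound positions within those of a same-kind pattern of $\rho$ (respectively $\lambda$). A naive generalisation of shared parameters abstracts over terms that were free in the guard, which shrinks the set of free positions and breaks covering. My first attempt will be to generalise only parameters that were introduced as the witnesses $\vec y$ of that very guard. Any other local parameter will be handled by an inductive argument that it is generalised at its own introduction step, processed from the leaves upward in order of introduction. If that fails, I would fall back to a two-phase proof: first a normal-form lemma rearranging the tableau so that $\delta$-steps occur as late as possible, and then the extraction.
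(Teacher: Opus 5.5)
Your framework (a biased tableau for $\lambda\wedge\neg\rho$, a guarded universal rule, a Maehara-style induction) matches the tableau-based extraction the paper relies on. There is, however, a genuine gap at exactly the step you call essential. You never say what the interpolant is at a $\gamma$-step whose guard atom carries the \emph{opposite} label from the universal formula, and without a change there the induction is not even sound. The mechanism you do commit to fails: abstracting a local parameter at the $\delta$-step that introduced it, using that step's guard. Parameters created by an $R$-labelled $\delta$-step are $R$-local, and the only sound abstraction there is universal, $\forall\vec y\,(S(\vec s,\vec y)\to\chi')$; the existential form you write is unsound. Worse, either form imports the $\delta$-formula's guard. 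That relation need not occur on the other side, and its pattern belongs to the wrong sentence: an $L$-side $\delta$-step puts an existential pattern of $\lambda$ into $\chi$, and an $R$-side one a universal pattern of $\rho$, the opposite of item 4. Concretely, take $\lambda=\forall x\,\forall y\,(S(x,y)\to P(y))$ and $\rho=\forall x\,(U(x)\to\forall y\,(S(x,y)\to P(y)))$. Two $R$-labelled $\delta$-steps give $U(a)$, $S(a,b)$, $\neg P(b)$; the $L$-universal fires on the $R$-guard $S(a,b)$, giving $P(b)$, and the leaf interpolant is $P(b)$. Abstracting $b$ and then $a$ at their introduction steps returns $\rho$ itself, which mentions $U$, absent from $\lambda$. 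The existential variant gives $\exists x\,(U(x)\wedge\exists y\,(S(x,y)\wedge P(y)))$, which is false in any model of $\lambda$ with empty relations. Likewise, for $\lambda=\exists y\,(P(y)\wedge Q(y))$ and $\rho=\exists x\,(Q(x)\wedge P(x))$, your scheme yields the existential pattern $(P,\emptyset)$, which no existential pattern of $\rho$ covers. Rearranging $\delta$-steps does not help.

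The missing idea is that $\delta$-steps leave the interpolant unchanged: guard atoms and quantifiers enter $\chi$ only at cross-side $\gamma$-steps. Suppose an $L$-labelled $\forall\vec y\,(S(\vec s,\vec y)\to\phi)$ fires on an $R$-labelled guard $S(\vec s,\vec u)$, with child interpolant $\chi'$. Let $\vec p$ be the terms of $\vec u$ occurring in no $L$-labelled formula at that node, and let $\theta$ be $S(\vec s,\vec u)\to\chi'$ with $\vec p$ replaced by fresh variables $\vec z$. Set $\chi=\forall\vec z\,\theta$, which is just $\neg S(\vec s,\vec u)\vee\chi'$ when $\vec p$ is empty. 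Dually, an $R$-labelled universal firing on an $L$-labelled guard gives $\exists\vec z\,\theta'$, with $\theta'$ obtained from $S(\vec s,\vec u)\wedge\chi'$ in the same way. Soundness is a direct check. Since $\vec s$ occurs both in the universal formula and in the opposite-side guard atom, it is common and never abstracted. So the new pattern is covered by that of the universal formula: a universal pattern of $\lambda$, or (dualizing $\neg\rho$) an existential pattern of $\rho$. The label of the guard then gives exactly the polarity clause of item 4: an $R$-labelled atom is a negative occurrence in $\rho$, and an $L$-labelled one a positive occurrence in $\lambda$. In the equality-free case a term can cross sides only at such a step, so every non-common term of a child interpolant is abstracted there (item 3). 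This also removes your worry about shrinking free positions. On the two examples the construction produces $\lambda$ and $\exists x\,(Q(x)\wedge P(x))$ respectively, both satisfying items 2--4.
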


We show how Theorem~\ref{thm:access-interpolation} suffices to prove  Theorem~\ref{thm:racomplete}. Recall the theorem:

\begin{quote}
For any Boolean $\rqfo$ query $Q$ and access schema $\aschema$ containing 
constraints expressible in $\rqfo$,
there is an RA-plan answering $Q$ (over instances in $\aschema$) 
if and only if $Q$ entails  $\accq{Q}$ with respect to the
rules in  $\accsb(\aschema)$.
If the query and constraints do not include equality, 
then the RA-plan will not make use of equality in any of its relational algebra expressions.
\end{quote}
From Claim~\ref{clm:raanddeterminacy}, we obtain  the ``plan-to-proof'' direction of Theorem~\ref{thm:racomplete}. 
Suppose $Q$ does not imply $\accq{Q}$ with respect to the rules in $\accsb(\aschema)$.
By Claim~\ref{clm:raanddeterminacy}, $Q$ is not access-determined, and it follows that no plan can answer $Q$.

For the ``proof-to-plan'' direction of Theorem~\ref{thm:racomplete},
we assume $Q$ entails $\accq{Q}$ and construct an RA-plan that answers $Q$.
We will use a slight modification of the axiom schema $\accsb$, 
denoted $\accsbp$ in which the relation $\accessible$ does not appear.
In every forward accessibility axiom an atom $\accessible(x)$ on the left is replaced by a relation $\infacc{R}(\vec z)$, 
where $\vec z$ contains $x$ in at least one position (and the  other variables are universally quantified). 
Occurrences of $\accessible$ on the right are dropped.
For example, the axiom $\accessible(x) \wedge R(x,y) \rightarrow \infacc{R}(x,y) \wedge \accessible(y)$ would be replaced by many axioms, 
including $\infacc{S}(x,w,z) \wedge R(x,y) \rightarrow \infacc{R}(x,y)$.
We also allow all  variants of these axioms in which  free variables corresponding to input
positions on the left are substituted by schema constants.
Similarly, in every backward accessibility axiom we replace $\accessible(x)$ on the left by an atom in the original schema containing $x$, 
again dropping occurrences of $\accessible$ on the right.

\begin{proposition} \label{prop:axequiv}
$Q$ proves $\accq{Q}$ using the axioms of $\accsb$ if and only if $Q$ proves  $\accq(Q)$ in the modified schema $\accsbp$.
\end{proposition}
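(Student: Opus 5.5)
The plan is to prove both directions semantically, by comparing the models of the two theories, since the entailments concern the same query. Write $\Theta$ for the constraints of $\accsb(\aschema)$ and $\Theta'$ for those of $\accsbp$. Both theories contain the original $\aschema$ constraints and the $\infacccon$ constraints; they differ only in the accessibility axioms and in the relation $\accessible$. The entailment ``$Q$ proves $\accq{Q}$'' holds in one schema iff no model of that schema satisfies $Q \wedge \neg\accq{Q}$. It therefore suffices to turn a countermodel for one schema into a countermodel for the other.

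\emph{From $\accsb$ to $\accsbp$.} Suppose $\Theta' \wedge Q \models \accq{Q}$, and let $M \models \Theta \wedge Q$. I would show that the reduct of $M$ obtained by forgetting $\accessible$ satisfies $\Theta'$; then $\accq{Q}$ holds in it, hence in $M$. The key fact to establish first is that $\Theta$ entails that every element in some $\infacc{S}$ fact is accessible, and likewise every element in an $R$ fact that is used in a backward axiom.
\begin{itemize}
\item For the forward direction, every $\infacc{S}$ fact in a model of $\Theta$ is required to be produced by a forward axiom. That is not literally true, because $\infacc{S}$ facts can also arise from the $\infacccon$ constraints. I will instead argue directly: in $\accsbp$ a forward axiom replaces $\accessible(x)$ by $\infacc{S}(\vec z)$, and I need $\Theta \models \infacc{S}(\vec z) \to \accessible(x)$. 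This fails in general. So the semantic reduct argument is too crude, and I expect this to be the main obstacle.
\item Constants are covered by the axioms $\accessible(c)$, together with the substituted variants of $\accsbp$.
\end{itemize}

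To resolve this, I would instead compare both entailments with access-determinacy, using the argument of Claim~\ref{clm:raanddeterminacy}. That proof only used that, in a countermodel, the $R$-part and the $\infacc{R}$-part have the same accessible part, where accessibility is computed by the iteration in Definition~\ref{def:accpart}.

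\emph{Entailment in $\accsbp$ implies access-determinacy.} Given instances $I, I'$ satisfying $\aschema$ with $\accpart(I) = \accpart(I')$, interpret $R$ by $I$ and $\infacc{R}$ by $I'$. I would check that $\accsbp$ holds in this structure. For a forward axiom $\infacc{S}(\vec z) \wedge R(\vec x) \to \infacc{R}(\vec x)$, where $x_j$ occurs in $\vec z$ for each input position $j$, the values $x_j$ appear in $I'$. I then need them to be accessible in order to conclude that $R(\vec x)$ lies in $\accpart(I) = \accpart(I')$, and hence in $I'$. Here there is a genuine gap: an arbitrary $I'$-value need not be accessible.

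My repair is to restrict the determinacy witnesses. Both $\accsb$-countermodels and $\accsbp$-structures would be taken to satisfy a normal form in which $\infacc{}$ relations contain only accessible facts. Concretely, in the proof of Claim~\ref{clm:raanddeterminacy}, I would show that the countermodel can be taken to be $I''$ with $\infacc{R}$ interpreted by the accessible part of $I'$ extended by a closure. I expect this step to need the most care.

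\emph{Access-determinacy implies entailment in $\accsbp$.} Take an $\accsbp$-model $M$ of $Q \wedge \neg\accq{Q}$. Let $I_1$ be its $R$-part and $I_2$ its $\infacc{}$-part. I would show $\accpart(I_1) \subseteq I_2$ by induction on the stages of Definition~\ref{def:accpart}, using the forward axioms: at each stage, every input value either is a constant, which is handled by the substituted variants, or occurs in an earlier accessible fact, which is already in $I_2$ and so supplies the needed $\infacc{S}(\vec z)$. The backward axioms give the symmetric inclusion, so $\accpart(I_1) = \accpart(I_2)$. This contradicts access-determinacy, and combining with Claim~\ref{clm:raanddeterminacy} yields the stated equivalence.
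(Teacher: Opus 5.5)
There is a genuine gap in the direction you call ``entailment in $\accsbp$ implies access-determinacy''. Combined with Claim~\ref{clm:raanddeterminacy}, this is the direction that yields the $\accsb$ entailment from the $\accsbp$ one. You diagnose the obstacle correctly: reading $R$ from $I$ and $\infacc{R}$ from $I'$, the premise $\infacc{S}(\vec z)$ of a modified forward axiom only tells you that $x_j$ occurs in $I'$.

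Your proposed repair, normalizing so that the $\infacc{R}$ relations contain only accessible facts, cannot work in general. The $\infacc{R}$ side must satisfy the $\infacccon$ constraints, and these can force inaccessible facts. For instance, if $\Sigma$ contains $\exists x\,\exists y\,B(x,y)$ and $B$ has no access method, then the $\infacc{B}$ part of every model is nonempty and none of its facts is accessible. Moreover, ``closing'' the accessible part of $I'$ under arbitrary $\rqfo$ constraints is not a well-defined operation, and any such closure reintroduces inaccessible facts. Finally, adding facts can make $\accq{Q}$ true, and for non-monotone $Q$ so can deleting them, which destroys the countermodel.

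The missing idea is that you never need the $\infacc{R}$ facts to be accessible, only the values shared by the two sides. Since $\accpart(I)=\accpart(I')$, the two instances have the same accessible values. So you may apply to $I'$ a bijection that fixes every accessible value (in particular every schema constant) and moves its inaccessible values outside $\adom(I)$. This is an isomorphism, so $I'$ still satisfies $\Sigma$, still falsifies $Q$, and has the same accessible part. Now suppose the premise of a modified forward axiom holds, with $R(\vec x)$ in $I$ and each input value $x_j$ occurring in some $\infacc{S}$ atom. Then $x_j\in\adom(I)\cap\adom(I')$, so $x_j$ is accessible (or is a constant). Hence $R(\vec x)\in\accpart(I)=\accpart(I')$, which gives $\infacc{R}(\vec x)$. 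The modified backward axioms $S(\ldots x_j\ldots)\wedge\infacc{R}(\vec x)\rightarrow R(\vec x)$ are handled symmetrically. This renaming is precisely the paper's argument for this direction.

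Your other direction is sound: in an $\accsbp$-countermodel the two sides have the same accessible part by induction on the stages, contradicting access-determinacy. The induction should also track that values accessible so far in $I_1$ are accessible in $I_2$. That way you obtain $\accpart(I_1)\subseteq\accpart(I_2)$, not merely that the accessible facts of $I_1$ lie in $I_2$. The paper proves this direction without going through access-determinacy. It expands the $\accsbp$-countermodel by interpreting $\accessible$ as the schema constants together with all values occurring both in some $\infacc{R}$ fact and in some original fact, and then checks the $\accsb$ axioms directly.
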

\begin{proof}
In one direction, suppose that $Q$ does not prove $\accq{Q}$ using the axioms of $\accsb$. 
Then by Claim~\ref{clm:raanddeterminacy}, $Q$ is not access-determined.  
Fix $\inst$ and $\inst'$ instances for the original schema satisfying $\Sigma$ with the same accessible part, 
but  disagreeing on the output of $Q$.  Without loss of generality, we can assume
that there is a tuple $\vec d$ in the output of $Q$ on $\inst$ but not in the output of  $Q$ on $\inst'$.
By applying an isomorphism to non-accessible values of $\inst'$ and $\inst$, 
we can assume that every non-accessible value of $\inst'$ is not in $\inst$ and vice versa.
Let $\inst^*$ be the instance   in which relations $R$ are interpreted
as in $\inst$ and relations $\infacc{R}$ are interpreted, as in $\inst'$, by
a new relation symbol.

We argue that $\inst^*$ satisfies $\accsbp$. 
Clearly both the original relations and the relations $\infacc{R}$ satisfy $\Sigma$. 
Consider a modified forward accessibility axiom (universal quantifiers omitted, and with no schema
constants for simplicity):
\[
\infacc{R_{j_1}}( \ldots x_{j_1} \ldots ) \wedge \ldots \infacc{R_{j_m}}(\ldots x_{j_m} \ldots ) \wedge R(\vec x) \rightarrow \infacc{R}(\vec x)
\]
where $R$ has an access method on positions $j_1 \ldots j_m$.
Suppose we have a tuple $c_{1} \ldots c_m$ in $\inst^*$ satisfying
the left-hand side of this implication. Then $c_{j_1} \ldots c_{j_m}$
must be accessible values of $\inst$ and $\inst'$. Since the fact $R(\vec c)$ holds
in $\inst$, it must be in the accessible part of $\inst$, and hence in the
accessible part of $\inst'$. So $\infacc{R}(\vec c)$ holds in $\inst^*$ 
as required. The backward accessibility axioms are argued symmetrically.

The tuple $\vec d$ is in $\interpret{Q}{\inst^*}$,
since it is in $\interpret{Q}{\inst}$ and $Q$ is a formula using  the relations in the original schema.
But $\vec d$ is not in $\interpret{Q}{\inst'}$.
Therefore $\vec d$  can not be returned by $\accq{Q}$ on $\inst^*$.
So $\inst^*$  witnesses that $Q$ does not prove $\accq{Q}$ in $\accsbp$.

In the other direction, suppose we have $\inst^*$ witnessing that $Q$ does not prove $\accq(Q)$ in $\accsbp$.
Expand $\inst^*$ to an instance $\inst^+$ for the signature extended with $\accessible$, 
by interpreting $\accessible$ by all values of schema constants
unioned with all values that lie in the domain of some $\infacc{R}$ relation
and in the domain of some relation of the original schema.
We show that the resulting instance satisfies the constraints of $\accsb$. 
The accessibility axioms follow directly from the corresponding axioms of $\accsbp$.
Similarly, the output of $Q$ in $\inst^+$ is the same as the output of $Q$ in $\inst$, 
while the output of $\accq{Q}$ on $\inst^+$ is the same  as the output of $\accq{Q}$ on $\inst^*$.
Therefore  $Q$ does not prove $\accq{Q}$ in $\accsb$.
\end{proof}

We can rephrase the modified assumption on $Q$ as:
\[
Q \wedge \Sigma_1 \models  \Sigma_2 \rightarrow \accq{Q}
\]
where $\Sigma_1$ contains the $\aschema$ constraints
along with the ``backwards'' accessibility axioms going from $\infacc{R}$ relations to $R$ relations within $\accsbp$,  
while $\Sigma_2$ contains the forward accessibility axioms and  the $\infacccon$  constraints. 

By the access interpolation theorem there is an $\rqfo$ formula $\chi$ in negation normal form such that:
\begin{enumerate}
\item $Q \wedge \Sigma_1 \models \chi$ 
\item $\chi \models \Sigma_2 \rightarrow  \accq{Q}$
\item a relation occurs positively (respectively negatively) in $\chi$ if
and only if it occurs positively 
      (resp. negatively) on both sides of the entailment.
\item for a binding pattern $p$ on a relation $U$ in $\chi$ if $p$ is  existential, 
      then it is covered by an existential pattern in $\Sigma_2 \rightarrow \accq{Q}$, and $U$  occurs positively in $Q \wedge \Sigma_1$.
If $p$ is a universal pattern, then it is covered by a universal pattern
of $Q \wedge \Sigma_1$ and $U$ occurs negatively in $\Sigma_2 \rightarrow \accq{Q}$.
\end{enumerate}

Because we are dealing with the modified axioms $\accsbp$, the only
relations in the entailment  are the original schema relations and their $\infacc$ copies.
Occurrences of the original schema relations $R$ on the right-hand side of the entailment are all in the forward accessibility axioms, 
and they correspond exactly to access methods of the schema. 
It follows that every existential pattern over the relations $R$ in $\chi$ is of the proper form.
Furthermore the relations $R$ occur only positively on the right ---
since $\Sigma_2 \rightarrow \accq{Q}$ in negation normal form is $(\nneg \Sigma_2) \vee \accq{Q}$,
and $\nneg$ applied to a forward accessibility axiom is of the form 
$\exists \vec x ~ R(\vec x) \wedge \ldots \wedge \neg \infacc{R}(\vec x)$.
So relations $R$ can occur only positively in $\chi$, and hence can occur only in existential patterns of $\chi$.
We conclude that all the patterns in $\chi$ involving the relations of the original schema must be existential and of the proper form.

Now let us  examine occurrences of relations of the form $\infacc{R}$ within $\chi$.
The last property of $\chi$  above implies that  any existential pattern on relations of the form 
$\infacc{R}$ would need to correspond to a positive occurrence of $\infacc{R}$ in $Q \wedge \Sigma_1$. 
But there are no such occurrences.
Any universal pattern on relations of the form $\infacc{R}$ would need to correspond to a universal pattern in $Q \wedge \Sigma_1$, 
and hence must be covered by the use of $\infacc{R}$ in a backward axiom. 
Hence all such occurrences are covered by an access method of the schema.

We  conclude that $\chi$ is an executable FO Boolean query. 
Let $\deacc(\chi)$ be the result of changing all relations of the form $\infacc{R}$ in $\chi$ to $R$.
We claim that $\deacc(\chi)$ is an executable rewriting of $Q$ 
(and hence can be converted to an RA-plan using Proposition~\ref{prop:execbool}).

We justify this by proving containments between $Q$ and $\deacc(\chi)$ in both directions.
Suppose tuple $\vec t$ is returned by $Q$ on an instance $\inst$ of $\aschema$.
Let $\inst'$ be the instance for the augmented schema in which both $\infacc{R}$ and $R$ relations are interpreted as in $\inst$.
We can see that $\inst'$ satisfies the constraints in $\Sigma_1$ and $\Sigma_2$.

Applying the first condition on an interpolant above, we infer that $\chi$ holds of $\vec t$ in $\inst'$, which
implies that $\chi$ holds of $\vec t$ in $\inst$. 
But $\deacc(\chi)$ evaluated on $\inst$ yields the same set of tuples as evaluating $\chi$ on $\inst'$. 
So $\vec t$ satisfies $\deacc(\chi)$ in $\inst$.

In the other direction, suppose  tuple $\vec t$  satisfies $\deacc(\chi)$  in $\inst$.
Then letting $\inst'$ be as above, we have that $\vec t$ satisfies $\chi$ in  $\inst'$.
By the second property of an interpolant we have that  $\Sigma_2 \rightarrow \accq{Q}$ holds in $\inst'$. 
Since $\inst'$ satisfies $\Sigma_2$,  $\vec t$ is returned by $\accq{Q}$  in $\inst'$. 
This tells us that  $\vec t$ is returned by $Q$ as required.

This completes the proof of the first assertion in Theorem~\ref{thm:racomplete}. 
The assertion about equality-free queries and constraints follows since the Access interpolation theorem produces
an equality-free interpolant when the entailment involves equality-free formulas, and the other transformations
(\eg, from nested plans to RA-plans) do not introduce equality.

\subsection{Variants of the Results for $\uspj$-Plans}
We now state an analogous  result  for negation-free plans, 
which will be an access-related variant of the Projective Monotone Preservation Theorem, 
Theorem~\ref{thm:plp}. 

Here we use the schema $\accs(\aschema)$ obtained from $\accsb$ by
throwing away all the backward accessibility axioms, and keeping only the forward ones.

\begin{theorem} \label{thm:spjcomplete}
For any Boolean $\rqfo$ query $Q$ and 
access schema $\aschema$ containing constraints specified in $\rqfo$,
the following are equivalent:
\begin{itemize}
\item there is a $\uspj$-plan answering $Q$ (over instances in $\aschema$) 
\item $Q$ entails $\accq{Q}$ with respect to $\accs(\aschema)$ ($Q \wedge \accs(\aschema) \models \accq{Q}$).
\end{itemize}

Furthermore, for every tableau proof witnessing $Q \wedge \accs(\aschema) \models \accq{Q}$, we can extract a $\uspj$-plan.
If the query and the constraints of $\aschema$ are specified by $\rqfo$ formulas without any equalities (\eg, TGDs), 
then we can replace $\uspj$-plan with $\EUSPJ$-plan in the above statement. 
So we do not need inequalities in the reformulation unless we have equalities in the constraints.
\end{theorem}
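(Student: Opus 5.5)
The proof mirrors the one just given for Theorem~\ref{thm:racomplete}, with monotonic determinacy playing the role of access-determinacy and the Lyndon-type clause of access interpolation doing the extra work.

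\textbf{Plan-to-proof direction.} Suppose a $\uspj$-plan answers $Q$. Let $I^*$ satisfy $\accs(\aschema)$ and $Q$, and let $I_1$ be its original relations and $I_2$ its $\infacc{R}$ relations renamed. By induction along the fixpoint of Definition~\ref{def:accpart}, using only the forward accessibility axioms, every fact of $\accpart(I_1)$ lies in $I_2$, and every accessible value of $I_1$ is accessible in $I_2$. A $\uspj$-plan is monotone in the accessed data: inputs to accesses only grow, and its RA expressions have no difference. So by induction on the commands, every temporary table computed on $I_1$ is contained in the corresponding table computed on $I_2$. Both instances satisfy the constraints, since $I^*$ satisfies the $\infacccon$ copies. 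Hence $Q$ true on $I_1$ gives $Q$ true on $I_2$, that is, $\accq{Q}$ holds in $I^*$.

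\textbf{Proof-to-plan direction.} I would first prove the analogue of Proposition~\ref{prop:axequiv} for the one-directional schema. Eliminate $\accessible$ by replacing each $\accessible(x)$ on the left of a forward axiom with an $\infacc{S}$-atom containing $x$, or with a schema constant. The semantic argument is the same as before, and the backward half simply disappears. The entailment then takes the form
\[
Q \wedge \Sigma \models \Sigma_2 \rightarrow \accq{Q},
\]
where $\Sigma_2$ consists of the modified forward axioms together with the $\infacccon$ constraints. Apply Theorem~\ref{thm:access-interpolation} to get $\chi$. Its relations must be common to both sides. The $\infacc{R}$ relations never occur on the left, so $\chi$ uses only original relations $R$. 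On the right these occur only in forward axioms; under $\nneg$ they appear positively and existentially, with bound positions exactly the outputs of access methods. Therefore $\chi$ has only positive occurrences. By the NNF argument from the proof of Theorem~\ref{thm:plp}, $\chi$ contains no relativized universal quantifiers. Every existential binding pattern is covered by an access method, so $\chi$ is an executable positive existential sentence, possibly with inequalities.

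To check that $\chi$ answers $Q$, interpret each $\infacc{R}$ as $R$. On any instance $I$ of $\aschema$, the left side yields $\chi$. Conversely, $\chi$ together with $\Sigma_2$, which holds trivially in the doubled copy, yields $\accq{Q}$, which is $Q$. Proposition~\ref{prop:execbool} then converts $\chi$ into a $\uspj$-plan. If no equality occurs in the query or constraints, clause (5) of the theorem makes $\chi$ equality-free, so we obtain an $\euspj$-plan. Effectiveness comes from the polynomial-time clause of access interpolation.

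\textbf{Main obstacle.} The delicate step is the elimination of $\accessible$. Since $\accessible$ never occurs on the left, it would drop out of $\chi$ anyway. The real task is making sure that rewriting the guards as $\infacc{S}$-atoms preserves entailment; this needs the instance $I^*$ built with disjoint renaming of non-accessible values, exactly as in the proof of Proposition~\ref{prop:axequiv}. One must also check that schema constants appearing in input positions are allowed by the executability definition.
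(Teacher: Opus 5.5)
Your proposal is correct and is essentially the paper's argument. The paper proves this theorem ``in the same way as the result for RA plans'', and your steps match that template: pass to the $\accessible$-free forward-only entailment $Q \wedge \Sigma \models \Sigma_2 \rightarrow \accq{Q}$, apply access interpolation, use the polarity and binding-pattern clauses to see that the interpolant is an executable positive existential sentence over the original relations, and convert it with Proposition~\ref{prop:execbool}; your monotonicity argument for the converse is the natural forward-only analogue of Claim~\ref{clm:raanddeterminacy}.

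One remark on the ``Furthermore'' clause: since, as you note, $\accessible$ occurs only on the right here, the detour through the analogue of Proposition~\ref{prop:axequiv} is unnecessary. Applying access interpolation directly to the original entailment, with the forward axioms written with the $\accessible$-guards outermost, already yields an executable positive existential interpolant, and it does so from the given tableau proof of $Q \wedge \accs(\aschema) \models \accq{Q}$ itself rather than from a proof of the modified entailment.
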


\myparagraph{The Schema $\accsneg$ and $\uspjneg$-Plans}
We now state an extension of the ``Projective {\lostarski} theorem'', Theorem~\ref{thm:plt}, to
the setting of access methods, again focusing on the case of Boolean queries.

\begin{theorem} \label{thm:uspjnegcomplete}
For any Boolean $\rqfo$ query $Q$ and access schema $\aschema$ containing constraints specified in $\rqfo$,
the following are equivalent:
\begin{itemize}
\item there is a $\uspjneg$-plan answering $Q$ (over instances for $\aschema$)
\item $Q \wedge \accsneg(\aschema) \models \accq{Q}$. That is, $Q$ entails  $\accq{Q}$ with respect to $\accsneg(\aschema)$.
\end{itemize}

Furthermore, from every tableau proof witnessing  $Q \wedge \accsneg(\aschema) \models \accq{Q}$,
we can effectively extract a $\uspjneg$-plan.
If the constraints of $\aschema$ and the query $Q$ are specified by equality-free $\rqfo$ sentences (in particular, TGDs), 
then we can replace $\uspjneg$-plan by  {\EUSPJneg}-plan.
\end{theorem}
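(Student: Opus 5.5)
Throughout I assume that $\accsneg(\aschema)$ is obtained from $\accsb(\aschema)$ by keeping all forward accessibility axioms and the $\infacccon$ constraints, and weakening each backward axiom so that it fires only when \emph{every} position is accessible:
\[
\bigwedge_{i \le n} \accessible(x_i) \wedge \infacc{R}(x_1 \ldots x_n) \rightarrow R(x_1 \ldots x_n),
\]
for each $R$ carrying an access method. This mirrors the restricted backward implication in the entailment for \IS-monotonicity. The proof follows the template used for Theorem~\ref{thm:racomplete}: a soundness direction (plan to proof) and an interpolation direction (proof to plan) based on Theorem~\ref{thm:access-interpolation}.

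\emph{Plan to entailment.} Suppose a $\uspjneg$-plan answers $Q$. By Proposition~\ref{prop:execbool} it is equivalent to an executable existential sentence $\phi$ in NNF. Its quantifiers are $\exists \vec y ~ R(\vec x,\vec y)\wedge\ldots$ with the inputs of a method of $R$ among $\vec x$, and its negative parts are negated atoms $\neg R(\vec t)$ all of whose arguments are already bound. Fix a model of $\accsneg(\aschema)$ satisfying $Q$. Then the $R$-part satisfies $\phi$. By induction on $\phi$ I show that every satisfying binding uses only accessible values and that $\deacc^{-1}(\phi)$, i.e.\ $\phi$ with $R$ replaced by $\infacc{R}$, holds on that binding.
\begin{itemize}
\item Existential steps transfer by the forward axioms, since the inputs are accessible; these axioms also make the witnesses accessible.
\item A negated atom $\neg R(\vec c)$ with all $c_i$ accessible transfers to $\neg\infacc{R}(\vec c)$ by the contrapositive of the restricted backward axiom.
\end{itemize}
Hence $\accq{\phi}$ holds. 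Since the $\infacc{R}$-part satisfies the constraints and $\phi$ answers $Q$ there, $\accq{Q}$ holds as well.

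\emph{Entailment to plan.} First I prove the analogue of Proposition~\ref{prop:axequiv}: eliminate $\accessible$ by replacing each atom $\accessible(x)$ on the left of an axiom by an atom containing $x$. In the forward axioms this is an $\infacc{S}$ atom; in the backward axioms it is an original-schema atom. Constants are substituted where needed, and the model-theoretic argument is the same as in that proposition. I then write the entailment as $Q\wedge\Sigma_1 \models \Sigma_2\rightarrow\accq{Q}$. Here $\Sigma_1$ consists of the $\aschema$ constraints and the modified restricted backward axioms, and $\Sigma_2$ consists of the modified forward axioms and the $\infacccon$ constraints. Applying Theorem~\ref{thm:access-interpolation} yields $\chi$, and I analyze it as in the RA case.
\begin{itemize}
\item Original relations $R$ occur only positively on the right. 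So they occur only in existential patterns of $\chi$, and these are covered by patterns from forward axioms, i.e.\ by access methods.
\item Relations $\infacc{R}$ never occur positively on the left, so they appear only in universal patterns of $\chi$. Each such pattern must be covered by a universal pattern of $\Sigma_1$. The only occurrences there are in the restricted backward axioms, where, after the guard atoms bind all of $\vec x$, $\infacc{R}(\vec x)$ appears negated with \emph{all} positions bound.
\end{itemize}
Consequently every universal quantification in $\chi$ over $\infacc{R}$ binds no new variables, so it is just $\neg\infacc{R}(\vec t)\vee\psi$ with $\vec t$ already bound. The NNF of $\chi$ is therefore existential, and after $\deacc$ it is an executable existential sentence whose negated atoms are non-membership checks. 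These checks are legitimate because $R$ has a method and all its positions are bound. Proposition~\ref{prop:execbool} then produces the $\uspjneg$-plan. Correctness of $\deacc(\chi)$ follows exactly as for Theorem~\ref{thm:racomplete}, by interpreting $R$ and $\infacc{R}$ identically. In the equality-free case, item 5 of Theorem~\ref{thm:access-interpolation} gives an {\EUSPJneg}-plan. Effectiveness follows from the polynomial-time extraction of $\chi$ from the proof.

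\emph{Main obstacle.} I expect the hardest step to be the binding-pattern bookkeeping for $\infacc{R}$. After removing $\accessible$, the guards in the modified backward axioms must really bind every position of $\infacc{R}(\vec x)$, including repeated variables and constants. Only then does the covering condition of Theorem~\ref{thm:access-interpolation} force the universal patterns of $\chi$ to be full, hence mere negated atoms. I would handle this by stating the modified axioms explicitly with one guard atom per position and checking the covering relation directly.
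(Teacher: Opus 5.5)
Your proposal is correct and follows the paper's route: eliminate $\accessible$ via the modified guarded axioms, write the entailment as $Q \wedge \Sigma_1 \models \Sigma_2 \rightarrow \accq{Q}$, and apply the access interpolation theorem. The binding-pattern conditions then force every $\infacc{R}$-guarded universal in $\chi$ to bind all positions, so it is a negated atom that becomes a non-membership check after $\deacc$. The only deviation is the plan-to-entailment direction, which you prove by direct induction on the executable existential sentence using the forward and restricted backward axioms, rather than through a semantic determinacy property as in Claim~\ref{clm:raanddeterminacy}; this is sound and spares you from formulating that property.
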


In the vocabulary-based setting, the entailment $Q \wedge \accsneg(\aschema) \models \accq{Q}$ reduces to 
the entailment given in Section~\ref{sec:vocabexistential}, 
which was shown there to be equivalent to the query being \IS-monotonically-determined. 
And clearly, in the vocabulary-based setting, a $\uspjneg$-plan can be expressed as an $\existsineq$  formula.
Thus Theorem~\ref{thm:uspjnegcomplete} generalizes the ``Projective {\lostarski} Theorem'',
Theorem~\ref{thm:plt}, characterizing queries that can be reformulated using $\existsineq$ formulas.

Both of these results  are proven in the same way as the result for RA plans. By Proposition \ref{prop:execbool}, instead of dealing
with plans as the target language, we deal with RQFO formulas with
certain binding patterns. We apply the Access Interpolation Theorem, and then analyze
the binding patterns in the resulting formulas, and show that we satisfy the requirements to convert to the associated plan.

\section{State of Play and Future Directions} \label{sec:future}
Here we have given a flavor of the use of interpolation in database rewriting problems, focusing on the cases of vocabulary-based restriction and access patterns.

Several other directions have been explored in the literature:

\myparagraph{Richer Data Models} The examples we have gone through here both concern relational data - i.e. traditional set-based semantics. In databases there are tools and query languages for other collection types, like bags, nested sets, nested bags, sequences, and nested sequences. The main work we know of beyond relational data is about nested sets, summarized in \cite{nestedlmcs}. The model starts with an infinite scalar type, and builds up objects via iterating tupling and powersets: thus one can have sets of sets of pairs of scalars. Pairs of sets of sets of scalars, etc.  There is a standard query language in the literature for this data model, nested relational calculus (NRC). In \cite{nestedlmcs} a notion of implicit definability/determinacy is defined for first-order logic formulas in a simple set-theoretic language. And it shown that any query over nested sets that is determined in this sense can be converted to NRC, effectively from a certain kind of proof.
Although at first glance this appears to be a version of Beth Definability for a fragment of set theory, in fact the only set theoretic axiom used is extensionality: sets with equal members are equal. Thus implicit definability in the context of nested data is really
a variation of standard implicit definability where the uniqueness is only ``up to extensional equivalence''. Extensional equivalence is definable in the language of set theory. Thus the question is whether there is a version of Beth's theorem that considers uniqueness up definable equivalence. One of the key insights in \cite{nestedlmcs} is that there is a variation of Beth that applies to this setting,  originating in work of Gaifman \cite{gaifman} under the name of ``rigid categoricity''.  While the results on rigid categoricity were proven model theoretically in \cite{gaifman} (see \cite{hodgesbook}), in \cite{nestedlmcs} a proof-theoretic argument is developed for the special case of extensional-equivalence.

A natural question is what about the case of \emph{bags}: finite multi-sets. While there is no standard semantics for first-order logic over bags, there is a semantics for conjunctive queries, and thus one can talk about determinacy of CQ views over bag semantics.
Note that even for Boolean queries, it is not clear whether determinacy is decidable, or whether it implies rewritability in some natural query language for bags.
The main results on this topic are in \cite{jerzybag}, where the case of Boolean views is demystified.

\myparagraph{Richer Query Languages over Traditional Data Models} The work we overview deals with fragments of first-order logic. The only other work we are aware of in this space
concerns fragments of the recursive query language Datalog, a fragment of fixpoint logic.
The first work in this space was \cite{regpathdet}, which investigates the case of views in
a fragment of Datalog, the ``regular path queries'': it is shown that determinacy implies rewritability, an analog of implicit definability.
The case of general Datalog queries and views was investigated in \cite{mondetdatalogj,monotonekr}.  The only implicit definability results are for fragments of Datalog where containment is decidable --- which means, very limited fragments. The main technique used is a variation of uniform interpolation, which hold for limited fixpoint languages \cite{mucalcuniforminterp}. See \refchapter{chapter:uniform} for a discussion of uniform interpolation algorithms.

\myparagraph{Other Proof Systems and Implementations}
Toman and Weddell were the first to develop interpolation-methods for query rewriting. Their initial approach, based on a tableaux calculus,  is presented in \cite{tomanweddell}, with follow-up work in
\cite{tomanbeth,tomanmagic}.  A rewriting approach for access patterns was developed
in \cite{BenediktLT15, usvldb14}, but only in a restricted context where interpolation is not necessary. 
\cite{qreformijcai} developed an approach for interpolation-based rewriting where the underlying proof system is resolution. This made use of the standard algorithmic approach to resolution-based interpolation due to Huang \cite{huang}, which is discussed in \refchapter{chapter:automated}.

\section{Bibliographic Remarks}
The precursors of the interpolation-based approach to rewriting are
in preservation or ``semantics to syntax'' theorems in model theory, such as the \lostarski ~ theorem, which states that preservation under extensions is equivalent to being rewritable as an existential formula: see for example \cite{hodgesbook, ChangKeisler}.
After the work of Lyndon \cite{lyndon59}, these results would be stated and proved
model-theoretically, and effective aspects would be ignored. In addition,
the presentations did not include the ``projective versions'', in which
one looks for a restricted formula in a particular vocabulary: these variants are important for the applications in databases.

The idea that Craig interpolation is relevant to query reformulation in databases, from the point of view of theory, dates
from the work of Segoufin and Vianu \cite{SVconf}, later explored jointly with Nash \cite{NSV}. Their investigation focused on the case of views --- a special case of vocabulary-based reformulation. They phrase their results as showing that relational algebra was sufficient in seeking a reformulation over views. 

Toman and Weddell \cite{tomanweddell} were the first to explore the use of effective interpolation to actually generate rewritings, in the context of query reformulation with constraints. Toman and Weddell also note connections with other query reformulation methods that have an interpolation-like feel, like the `chase and backchase method'', which is used to reformulate queries in the presence of a certain class of background theories, so-called Tuple-Generating Dependencies (TGDs): for an overview of the chase and backchase, see \cite{candbsigrecord}.

\pagebreak
Much of the material in this chapter here is taken from \cite{interpbook}, which extends earlier work \cite{uspods,ustods} looking systematically at interpolation for both vocabulary-based reformulation and rewriting with access methods.
In this chapter we have given some samples of vocabulary-based reformulation for ``first-order'' (or relational algebra)
queries, and on access-method based reformulation for the same queries, but the book covers several other cases. The book and the related papers explicitly connect these reformulation methods with preservation theorems and other semantics-to-syntax results in the earlier model theory literature.

The case of so-called ``nested relations'', which can be considered as a fragment of set theory,  is taken from \cite{nestedlmcs}. We have overviewed these results briefly in Section \ref{sec:future}.

The case of Datalog (fixpoint) queries and views is considered in \cite{regpathdet,mondetdatalogj,monotonekr}. Here the results are related to uniform interpolation, and they work via automata-theoretic methods, as in \cite{mucalcuniforminterp}.

\bibliography{algs,taci}

\end{document}